\newcommand{\dwedge}{\mathbin{\rotatebox[origin=c]{90}{\larger[-2]$\gg$}}}
\newcommand{\wedgedot}{\mathbin{\rotatebox[origin=c]{-90}{\larger[-2]$\lessdot$}}}
\def\fg{\mathfrak g}
\def\fh{\mathfrak h}
\def\L8{L_\infty}
\def\cL{\mathcal L}
\def\cM{\mathcal M}
\def\cS{\mathcal S}
\def\NN{\mathbb N}
\def\RR{\mathbb R}
\def\ZZ{\mathbb Z}
\def\sO{\mathscr O}
\def\fJ(E){\mathfrak E}
\def\fJ{\mathfrak J}
\newtheorem{thm}{Theorem}[section]
\newtheorem{prop}[thm]{Proposition}
\newtheorem{lem}[thm]{Lemma}
\newtheorem{cor}[thm]{Corollary}
\theoremstyle{definition}
\newtheorem{defn}[thm]{Definition}
\newtheorem{ex}[thm]{Example}
\newtheorem{rmk}[thm]{Remark}
\DeclareMathOperator{\Tr}{Tr}
\newcommand{\Sym}{\mathrm{Sym}}
\newcommand*{\doublerightarrow}[2]{\mathrel{
  \settowidth{\@tempdima}{$\scriptstyle#1$}
  \settowidth{\@tempdimb}{$\scriptstyle#2$}
  \ifdim\@tempdimb>\@tempdima \@tempdima=\@tempdimb\fi
  \mathop{\vcenter{
    \offinterlineskip\ialign{\hbox to\dimexpr\@tempdima+1em{##}\cr
    \rightarrowfill\cr\noalign{\kern.5ex}
    \rightarrowfill\cr}}}\limits^{\!#1}_{\!#2}}}
\newcommand*{\triplerightarrow}[1]{\mathrel{
  \settowidth{\@tempdima}{$\scriptstyle#1$}
  \mathop{\vcenter{
    \offinterlineskip\ialign{\hbox to\dimexpr\@tempdima+1em{##}\cr
    \rightarrowfill\cr\noalign{\kern.5ex}
    \rightarrowfill\cr\noalign{\kern.5ex}
    \rightarrowfill\cr}}}\limits^{\!#1}}}
\title{Adjoint $L_\infty$-Actions and Conserved Charges in GR}
\author{Changsun Choi}
\address{Department of Physics, Montana State University, Bozeman 59717 USA}
\email{changsunchoi@montana.edu}
\author{Ryan E.~Grady}
\address{Department of Mathematical Sciences, Montana State University, Bozeman 59717 USA}
\email{ryan.grady1@montana.edu}
\begin{document}

\begin{abstract} In this work we compute the conserved currents and charges associated to the action of an infinitesimal isometry (Killing field) in Einstein--Cartan--Palatini gravity.  We offer a new approach to these quantities through the formalism of $L_\infty$-algebras, using the work of \'{C}iri\'{c}, Giotopoulos, Radovanovi\'{c}, and Szabo, as well as Costello and Gwilliam. We demonstrate our approach by computing the entropy of the Schwarzschild and Kerr black holes.  Along the way, we prove a purely algebraic result about the existence and utility of  a higher, indeed, fully homotopy coherent, version of the adjoint action of an $L_\infty$-algebra.
\end{abstract}

\maketitle

\tableofcontents

\section{Introduction} 

This paper is organized around computing the conserved currents and charges associated to the action of an infinitesimal isometry (Killing vector field) in Einstein gravity.  This problem is well studied and has appeared in textbooks for 50+ years.  Our main contribution  lies  more in the use of the formalism and classical Noether Theorem (Theorem 12.4.1) of Costello and Gwilliam \cite{CG2}. This both provides a novel example of the machinery and sets the stage for further applications of the Costello--Gwilliam formalism to (perturbative) gravity.

The input for Costello and Gwilliam's theorem is the action of a local $L_\infty$-algebra on a classical BV theory presented by another local cyclic $L_\infty$-algebra.  That gravity can be presented as such is a result of  \'{C}iri\'{c}, Giotopoulos, Radovanovi\'{c}, and Szabo \cite{CGRS}, coupled with a few minor observations of our own (see Section \ref{sect:CGRS}).  Einstein gravity can be presented in the Einstein--Cartan--Palatini (ECP) framework, where dynamic metrics are replaced by dynamic tetrads (coframes) and spin connections.  Building on previous work of Cattaneo and Schiavina, e.g., \cite{CS1}, \'{C}iri\'{c}--Giotopoulos--Radovanovi\'{c}--Szabo construct an $L_\infty$ presentation for the BV extension of ECP.  We show that this $L_\infty$-algebra acts on itself via a higher form of the adjoint action.  So too does any subalgebra, in particular the subalgebra of vector fields.  To this end, our first main result is the following.

\begin{thm}[Theorem \ref{thm:main2}]
Let $\cM_\text{ECP}$ be the local cyclic $L_\infty$-algebra presenting Einstein--Cartan--Palatini gravity on the four-manifold $M$. Let $\xi \in \mathrm{Vect}(M)$ be a Killing vector and set $\Lambda =0$.  Then, on shell, the current associated to $\xi$, 
\[
J[\xi] \colon \Omega^\ast_M \to \sO_\text{loc} (\cM)[-1],
\]
can be expressed locally by
\[
J[\xi](\beta)=\beta\wedge d\mathcal{Q}[\xi]\quad{\rm with}\quad  \mathcal{Q}[\xi]= \frac{1}{2}  \Tr \left (\iota_\xi \omega \wedge e \wedge e \right ).
\]\end{thm}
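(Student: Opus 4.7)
The plan is to apply the Costello--Gwilliam classical Noether theorem (Theorem 12.4.1 of \cite{CG2}) to the higher adjoint $L_\infty$-action of $\mathrm{Vect}(M)$ on $\cM_\text{ECP}$ constructed earlier in the paper. That theorem directly produces a map $\Omega^\ast_M \to \sO_\text{loc}(\cM)[-1]$, so the task reduces to identifying the local functional into which it lands and showing it has the claimed form.

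First, I would recall the ECP Lagrangian density $\cL_\text{ECP} \propto \Tr(R[\omega]\wedge e\wedge e)$ with $R[\omega]=d\omega+\omega\wedge\omega$, together with the $L_\infty$-action of $\xi$ on $(e,\omega)$ through the higher adjoint: the unary bracket should recover the Lie derivative $L_\xi$, while the higher brackets account for compatibility with the BV differential. The cosmological constant hypothesis $\Lambda=0$ removes the $\Tr(e^{\wedge 4})$ contribution, so the classical equations of motion reduce to the torsion constraint $de+[\omega,e]=0$ and the vacuum Einstein equation $\Tr(R\wedge e)=0$.

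Next, the Noether current $J[\xi]$ is characterized (on shell) by $dJ[\xi]$ equalling the pairing of the $\xi$-action against the BV action, i.e.\ $L_\xi\cL_\text{ECP}$ modulo equations of motion. Since $\cL_\text{ECP}$ is top-degree, Cartan's formula gives $L_\xi\cL_\text{ECP}=d(\iota_\xi\cL_\text{ECP})$, so the core computation is to establish
\[
\iota_\xi\Tr(R\wedge e\wedge e) \;=\; d\mathcal{Q}[\xi]\;+\;(\text{EOM terms}),
\]
with $\mathcal{Q}[\xi]=\tfrac{1}{2}\Tr(\iota_\xi\omega\wedge e\wedge e)$. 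I would expand the left-hand side by the Leibniz rule, apply Cartan's formula to $\omega$ to express $\iota_\xi R$ in terms of a covariant derivative of $\iota_\xi\omega$ and the term $L_\xi\omega$, and then use the Bianchi identity together with the torsion constraint to cancel the residual terms against a single $d$-exact piece.

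The main obstacle I anticipate is bookkeeping in the BV setting: the phrase ``on shell'' in the statement refers to exactness in the local $L_\infty$-BV complex, not merely on the classical critical locus. Each equation-of-motion term produced in the previous step must therefore be shown to be absorbed into an antifield contribution, or equivalently into a higher bracket of the adjoint action. The earlier algebraic result on the full $\infty$ version of the adjoint action supplies precisely these higher corrections, so the cleanest proof should read off $d\mathcal{Q}[\xi]$ from the unary component of the action while attributing the remaining on-shell-vanishing contributions to the higher brackets in a uniform way; the Killing hypothesis enters at exactly this point, since only for $\xi$ an infinitesimal isometry do the higher-bracket terms organize into a total derivative rather than producing genuine $L_\infty$-anomalies.
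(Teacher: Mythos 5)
Your overall framing matches the paper's: restrict the infinity adjoint action to $\mathrm{Vect}(M)$, extend it to $\Omega^\ast_M\otimes\mathrm{Vect}(M)$ by minimal coupling, read the current off from the Costello--Gwilliam machinery, and reduce to a covariant phase space computation with $\mathsf{L}=\tfrac12\Tr(e\wedge e\wedge F_\omega)$. But there is a genuine gap in how you propose to identify $J[\xi]$. In this formalism the current is \emph{defined} as $\delta S^\text{tot}/\delta A_\xi$, the derivative of the equivariant action in the background field, and the paper computes this explicitly: the minimal coupling gives $S^\text{tot}=S_\text{ECP}+\tfrac16\int_M\Tr\left(2e^2\wedge A_\xi\cdot\omega-2\omega\wedge e\wedge A_\xi\cdot e\right)$, whence $J_3[\xi]=\tfrac13 e^2\wedge L_\xi\omega-\tfrac13\omega\wedge e\wedge L_\xi e$; the Killing hypothesis is then used, via $L_\xi(e^2\wedge\omega)=0$, to combine the two terms into $\tfrac12 e^2\wedge L_\xi\omega$, and only afterwards does one prove the on-shell identity $\tfrac12 e^2\wedge L_\xi\omega=d\mathcal{Q}[\xi]$. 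You skip this computation and instead characterize the current by ``$dJ[\xi]$ equals $L_\xi\mathsf{L}$ mod EOM,'' which (i) determines $J$ only up to a closed local functional, so it cannot by itself pin down the claimed formula, and (ii) leads you to the wrong core identity. You propose to show $\iota_\xi\Tr(F_\omega\wedge e\wedge e)=d\mathcal{Q}[\xi]+(\text{EOM})$, but when $\Lambda=0$ the left-hand side is itself pure EOM (it is $\iota_\xi$ applied to $e\wedge(e\wedge F_\omega)$, and $e\wedge F_\omega$ is the Einstein equation), while $d\mathcal{Q}[\xi]=\tfrac12 e^2\wedge L_\xi\omega$ is not EOM-exact in general. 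The correct identity, which the paper extracts from the variational boundary term, is $\iota_\xi\mathsf{L}=\tfrac12 e^2\wedge L_\xi\omega-d\mathcal{Q}[\xi]+(\text{EOM})$; your version drops the presymplectic potential contribution $\Theta(\delta_\xi)=\tfrac12 e^2\wedge L_\xi\omega$, which is precisely the current.

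Two smaller points. You locate the role of the Killing hypothesis in organizing ``higher-bracket terms'' into a total derivative, but the higher brackets play no special role here (the coupling to $A_\xi$ comes from a single $\ell_2$-type term); Killing enters earlier and more concretely, to convert $\omega\wedge e\wedge L_\xi e$ into $-\tfrac12 e^2\wedge L_\xi\omega$ and produce the factor $\tfrac12$. Likewise, the paper's ``on shell'' is the naive classical critical locus $U=V=0$, not exactness in the BV complex, so the antifield bookkeeping you anticipate as the main obstacle is not actually needed.
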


While the expression for the conserved current/charge is well known, our method, which utilizes equivariant action functionals derived from $L_\infty$-actions, has not been applied in this area before. As an explicit demonstration we compute the black hole entropy for Schwarzschild and Kerr spacetimes. A different approach that utilizes $L_\infty$-algebras in a different way to compute charges in gravity, using them to define homotopy moment maps, is in the work of Blohmann \cite{Bloh}.

Further, while $\mathrm{Vect}(M)$ is an ordinary Lie algebra, the $L_\infty$-formalism is essential to connect with the formalism of \cite{CG2} and provide a non-trivial example of their classical Noether Theorem.  
Moreover, our observations at the level of the full $L_\infty$-algebra provide a link between the work of Cattaneo--Schiavina (\cite{CS1}, \cite{CS2}), \'{C}iri\'{c}--Giotopoulos--Radovanovi\'{c}--Szabo (\cite{CGRS}), and the setting of Costello--Gwilliam \cite{CG1}, \cite{CG2}.

Along the way, we define and prove properties of a fully $L_\infty$ version of the adjoint action.  We call this the \emph{infinity adjoint action} and it is a strict generalization of the 
constructions of Mehta--Zambon \cite{MZ12} and Vitagliano \cite{Vit15}. See Section \ref{sect:compare} for an explicit comparison.  More specifically, we prove the following.

\begin{thm}[Theorem \ref{thm:main}] Let $(\cM, \{\ell_k\},\langle-,-\rangle)$ be a cyclic $L_\infty$-algebra.  There is an action of $\cM$ on itself, the \emph{infinity adjoint action}, such that
\begin{itemize}
\item[(a)] The action is compatible with the pairing (symplectic) on $\cM$;
\item[(b)] The action is Hamiltonian (inner); and
\item[(c)] The equivariant action is of the form $S^\text{tot}=S^\cL + S^\cM$, with $S^\cM$ encoding the $L_\infty$-structure on $\cM$.  In particular, $S^\text{tot}$ recovers the underlying classical action defined by $\cM$ when the background fields are set to zero.
\end{itemize}
\end{thm}

The present work is another step towards utilizing \emph{factorization algebras} in classical (and perturbative quantum) gravity; see also the recent work of Dul on linearized gravity \cite{Dul}. In a somewhat complementary direction, from an AQFT perspective, there is a good amount of related work by Rejzner, e.g., joint with Gwilliam \cite{GR}, \cite{GR2} and Benini and Schenkel (with collaborators) \cite{BS}, \cite{BCS}.

Since one of our goals is to provide an interesting example of the connection between several mathematical approaches to (classical) field theory and the Batalin--Vilkovisky (BV) formalism, and  since the foundational source material is spread across more than 1000 pages of literature, including the nearly 800-page two-volume set \cite{CG1}, \cite{CG2} (which, thankfully, is well written and readable),  we have chosen to present rather extended preliminary sections.  We hope that these two preliminary sections (and later, Sections \ref{sect:currents} and \ref{sect:CGRS}) might prove useful to others. Sections \ref{sect:infinity} and \ref{sect:charges} are genuinely new contributions (though likely not surprising to some subset of experts).

\subsection{Acknowledgements}
The authors thank Michele Schiavina and Filip Dul for  helpful correspondence.  They also thank Neil Cornish for many useful interactions and support as CC's co-advisor.
REG is supported by the Simons Foundation under Travel Support/Collaboration 9966728, he additionally thanks Damien Calaque for many discussions about this material during a visit to Montpellier which was supported by an OCCIMATH mini-grant.


\section{Preliminaries I: Local $L_\infty$-algebras}

Note that throughout we use cohomological grading and shift conventions, e.g., $V[-1]$ is a shift of the cochain complex $V$  ``one to the right." We adopt the Koszul sign rule, i.e., $uv = (-1)^{\lvert u \rvert \lvert v \rvert} vu$ for the graded (co)commutative case. The notation $V^\vee$ will denote the graded dual. We also use the symbol $\epsilon$ in two ways to indicate signs: $\epsilon (\sigma)$ will denote the sign of a permutation (or of a $(p,q)$-shuffle considered as a permutation) and $\epsilon_{ijkl}$ will denote the Levi--Civita symbol.  The Einstein summation convention is also employed, so we will sum over repeated indices of tensors.

Finally, recall that for a graded vector space (cochain complex, dg module, etc.), 
\[
\mathrm{Sym} (V) := \bigoplus_{n \ge0} \mathrm{Sym}^n (V)
\]
is a cocommutative coalgebra with coproduct
\[
\Delta(v_1 \cdots v_n) = \sum \epsilon(\sigma) v_{\sigma(1)} \cdots v_{\sigma(p)} \otimes v_{\sigma(p+1)} \cdots v_{\sigma(n)}
\]
where the sum is over all $(p,q)$-shuffles $\sigma$ with $p+q = n$.  The symmetric algebra also has the structure of a (graded) commutative algebra, as does its completed version.  On ($\RR$-linear) duals, the latter will be defined by
\[
\widehat{\mathrm{Sym}} (V^\vee) := \prod_{n \ge 0} \mathrm{Hom} \left ((V)^{\otimes n} , \RR \right)_{S_n},
\]
where the subscript denotes coinvariants with respect to the symmetric group and the right-hand side is equipped with the topology appropriate to the situation.

For sections of vector bundles over a manifold, we will also use  $\otimes$ to indicate the completed projective tensor product.


\subsection{Local $L_\infty$-algebras and Classical Field Theory}\label{sect:LCFT}

As opposed to working over a general ground ring, $R$, which is a commutative algebra over a characteristic zero field, we will be concrete and fix $R = \RR$.  Our application in later sections is over $\RR$ anyway.

\begin{defn}
An \emph{$L_\infty$-algebra consists} of a $\ZZ$-graded vector space, $\cL$, and a collection of $k$-ary multilinear maps $\{\ell_k\}_{k \ge 1}$,
\[
\ell_k \colon \cL^{\otimes k} \to \cL
\]
such that
\begin{itemize}
\item[(a)] For each $k \ge 1$, the map $\ell_k$ is graded anti-symmetric and of degree $2-k$, and
\item[(b)] The maps $\{\ell_k\}$ satisfy the \emph{generalized Jacobi identities}, i.e., for each $n \ge 1$
\[
\sum_{k=1}^n(-1)^{k(n-k)}\sum_{\sigma \in \mathrm{Sh}(k,n-k)} \chi(\sigma)\, \ell_{n-k+1} \left ( \ell_k \left (v_{\sigma (1)}, \dotsc , v_{\sigma (k)} \right ), v_{\sigma (k+1)} , \dotsc v_{\sigma (n)} \right ) =0.\]
\end{itemize}
\end{defn}

In the above, the sign is defined as \[\chi(\sigma)=\prod\left\{ -(-1)^{|v_i||v_j|} \,| \,i<j, \sigma(i)>\sigma(j) \right\}.\]

Any graded Lie algebra is an $L_\infty$-algebra as is any differential graded Lie algebra (dgla).  In the former case, only $\ell_2$ is nonzero, while in the latter, there can be nonzero $\ell_1$ and $\ell_2$.  In general, the multilinear maps $\{\ell_k\}$ are referred to as ``(higher) brackets."

Following \cite{CG2}, we will be interested in a particular type of sheaf of $L_\infty$-algebras parametrized by a spacetime manifold $M$.

\begin{defn}
Let $M$ be a manifold.  An $L_\infty$-algebra, $(\cL,\{\ell_k\})$, is \emph{local (on $M$)} if
\begin{itemize}
\item[(a)] $\cL$ is the space of sections of a graded vector bundle $L$ on $M$; and
\item[(b)] All brackets $\ell_k \colon \cL^{\otimes k} \to \cL$ are given by poly-differential operators.
\end{itemize}
\end{defn}

One of the most significant examples of a local $L_\infty$-algebra is the dgla of Lie algebra valued forms on an oriented $n$-manifold, $M$, $\cL = \Omega^\ast(M, \fg)$, where $\fg$ is an ordinary Lie algebra and the grading corresponds to form degree. The underlying vector bundle is $L = \Lambda^\ast(T^*M) \otimes \underline{\fg}$, where $\underline{\fg}$ is the trivial bundle with fiber $\fg$.  Note that if $M$ is a closed 3-manifold and $\fg$ is semi-simple, then we have a degree $-3$ pairing
\[
\langle-,-\rangle_L \colon L \otimes L \to \mathrm{Dens}_M [-3], \quad \alpha \otimes \beta \mapsto \Tr(\alpha \wedge \beta),
\]
where $\Tr$ is induced by the Killing form on $\fg$ and $\mathrm{Dens}_M$ is the density line bundle on $M$ (which in this case is trivialized via the orientation).  Moreover, this pairing is \emph{non-degenerate} and \emph{invariant}.  That is, the pairing induces a vector bundle isomorphism
\[
L \to L^\vee \otimes \mathrm{Dens}_M[-3]  \tag{\text{Non-Degeneracy}},
\]
and for each $n$ and all sections $\alpha_1, \dotsc \alpha_{n+1}$, the induced pairing on sections (which we simply denote by $\langle-,-\rangle$ with no subscript) is \emph{compatible} with the $L_\infty$ brackets, i.e., the map
\[
\cL^{\otimes n+1} \to \RR, \quad \alpha_1 \otimes \dotsb \otimes \alpha_n \otimes \alpha_{n+1} \mapsto \langle \ell_n (\alpha_1 , \dotsc , \alpha_n), \alpha_{n+1} \rangle \tag{\text{Invariance}}
\]
is graded anti-symmetric in all the $\alpha_i$.  

%

\begin{defn}
Let $\cL$ be a local $L_\infty$-algebra on $M$.  The algebra $\cL$ is \emph{cyclic of degree} $k$ if it is equipped with a degree $k$  non-degenerate symmetric pairing of vector bundles
\[
\langle -,-\rangle_L \colon L \otimes L \to \mathrm{Dens}_M [k]
\]
such that the induced pairing on compactly supported sections $\langle -, - \rangle \colon \cL_c \otimes \cL_c \to \RR$ is an invariant pairing. Here, symmetric means $\langle u,v  \rangle=(-1)^{(|u|+k)(|v|+k)}\langle v,u \rangle$.
\end{defn}

In what follows, all of our cyclic algebras will have pairings of degree $-3$.

Given a cyclic, local $L_\infty$-algebra, $(\cM, \{\ell_k\},\langle-,-\rangle)$, (on the manifold $M$, with finitely many nonzero brackets) we can use the shift $s:\cM\to\cM[1]$ to obtain a cyclic, local coalgebra $(\cM[1], \{\delta_k\},\langle-,-\rangle^{sh})$, where
\[
s\ell_k(v_1,...,v_k)=\delta_k\circ s^{\otimes k}(v_1,...,v_k)\quad{\rm and}\quad \langle\alpha,\beta\rangle^{sh}_L=(-1)^{|\alpha|}\langle s^{-1}\alpha, s^{-1}\beta\rangle_L.
\] The induced shifted pairing is defined similarly. Then, there is an associated \emph{Lagrangian density} on compactly supported sections. (We are denoting our $L_\infty$-algebra by $\cM$ at this point as later there will be two $L_\infty$-algebras and they play slightly different roles; this notation will be consistent with later sections.)   Indeed, for $\alpha \in \cM_c[1]$,
\[
\mathsf{L}(\alpha) = \sum_{k \ge 1} \frac{1}{(k+1)!} \langle \alpha, \delta_k (\alpha, \dotsc, \alpha) \rangle^{sh}_L \in \mathrm{Dens}_M.
\]
Correspondingly, there is an action functional $S \colon \cM_c[1] \to R$, given by
\[
S_\cM (\alpha) = \int_M \mathsf{L}(\alpha) = \sum_{k \ge 1} \frac{1}{(k+1)!} \langle \alpha, \delta_k (\alpha, \dotsc, \alpha) \rangle^{sh}.
\]

\begin{defn}
A local, cyclic $L_\infty$-algebra, $(\cM, \{\ell_k\},\langle-,-\rangle)$, (on the manifold $M$, with finitely many nonzero brackets) \emph{presents a classical field theory} with space of fields $\cM[1]$ and action $S_\cM \colon \cM_c[1] \to \RR$.
\end{defn}

In the case from before, $\cM := \Omega^\ast (M, \fg)$, with $M$ a closed oriented 3-manifold and $\fg$ semi-simple, the $L_\infty$-algebra $\cM$ is a presentation for a version of perturbative Chern--Simons theory.  Indeed, for $(\cM, \{\ell_k\}, \langle -,- \rangle)$ a local, cyclic $L_\infty$-algebra, the Euler--Lagrange equations for $S_\cM$ are the (generalized) Maurer--Cartan equations 
\[
\delta S_\cM(\alpha) = 0 \,\,\Leftrightarrow \,\sum_{k \ge 1} \frac{1}{k!} \delta_k (\alpha, \dotsc , \alpha) =0\,
\Leftrightarrow \,\sum_{k \ge 1}(-1)^{\frac{1}{2} k(k-1)} \frac{1}{k!} \ell_k  (\alpha, \dotsc , \alpha) =0
\] for degree zero element $\alpha\in \cM[1].$
In the above the sign comes from  $\delta_k(sv_1,...,sv_k)=(-1)^{(k-1)|v_1|+\cdots (k-k)|v_k|}s\ell_k(v_1,...,v_k)$ and $|\alpha|=|sv_i|=0$ for $1\leq i\leq k.$ Hence, we see that the intersection of the Euler--Lagrange locus with the classical original fields (those of degree one before the standard [1] shift, after which they are of degree 0) in $\cM = \Omega^\ast (M, \fg)$ exactly correspond to flat connections on the trivial $G$-bundle over $M$, where $\fg$ is the Lie algebra of $G$.

The structure of homotopy algebras in classical (and quantum) field theory, e.g., $A_\infty$ and $L_\infty$-algebras, is well documented, especially theories in the formalism of Batalin and Vilkovisky (BV).  In this article, we are interested in a particular presentation of a particular theory, so we won't discuss issues of existence, equivalence, etc.  Such issues (under varying hypotheses and conventions) are well discussed in Chapter 4 of \cite{CG2}, \cite{HohmZwiebach17}, \cite{JurcoMacrelli19}, and \cite{JurcoKim20}.

Note that in BV theory, the classical action should satisfy a \emph{master equation}: $\{S,S\}=0$.  Indeed, the action $S_\cM$ associated to a local, cyclic $L_\infty$-algebra, $(\cM, \{\ell_k\},\langle-,-\rangle)$, does satisfy the master equation in the algebra of local functionals.  In the formalism of \cite{CG2}, this follows from the discussion in Sections 3.5 and 4.4.  As one would expect, the invariant pairing on fields induces a Poisson bracket on functionals.  We will return to this, and a related point, in a later section when we discuss \emph{equivariant actions}. 

\begin{rmk}\label{rmk:shift}
We shift our $L_\infty$-algebra down by one, so that in the motivating example, the classical fields live in degree zero, the ghosts live in degree $-1$, etc.; this convention agrees with \cite{CG2}.  One need not perform the shift, indeed \cite{CGRS} is in the unshifted convention.  Hence, in Sections \ref{sect:CGRS} and \ref{sect:charges} we will work unshifted.  The action functional picks up additional signs in the unshifted setting: 
\[
S_\cM (\alpha) = \sum_{k \ge 1} (-1)^{\frac{1}{2} k(k-1)}\frac{1}{(k+1)!} \langle \alpha, \ell_k (\alpha, \dotsc, \alpha) \rangle.
\]
\end{rmk}

\begin{rmk}
As opposed to assuming only finitely many nonzero brackets, we could instead insist that our $L_\infty$-algebra is nilpotent. While this is not the case in our theory of interest, this condition is common (and useful) in field theory, especially $\sigma$-models, e.g., \cite{GG1}.
\end{rmk}

\begin{rmk}
In many cases, e.g., the definition of classical field theory in \cite{CG2}, one asks that $(\cL, \ell_1)$ is actually an \emph{elliptic complex}.  For the de Rham complex (and generally, complexes obtained from the de Rham complex), ellipticity holds for Riemannian manifolds, but generally fails for Lorentzian manifolds.  When the complex of fields is elliptic, one has a rich structure on the algebra of local observables, e.g., a $P_0$-factorization algebra.  As our interest is in constructing specific observables (and only at the classical level), not the whole algebra, the lack of ellipticity won't be an issue.
\end{rmk}

\subsection{Chains and Cochains of an $L_\infty$-algebra}

Given an $L_\infty$-algebra, $(\cL, \{\ell_k\})$, there is a well-defined  Lie algebra  homology complex, $C_\ast (\cL)$, which generalizes that of the Chevalley--Eilenberg homology complex of an ordinary (differential graded) Lie algebra.

\begin{defn}
Let $(\cL, \{\ell_k\})$ be an $L_\infty$-algebra.  The \emph{Lie algebra homology complex}, denoted $C_\ast (\cL)$, is the differential graded cocommutative coalgebra $\Sym(\cL[1])$ equipped with the degree one coderivation $\delta$ whose restriction to cogenerators are given by the brackets $\{\ell_k\}$, i.e., for each $k \ge 1$ the map
\[
\delta_k \colon \Sym^k (\cL[1]) \to \cL[1]
\]
is related to $\ell_k$ as earlier.
\end{defn}

By construction the data of the coderivation is the same as the data of the brackets, and that $\delta^2=0$ is equivalent to the generalized Jacobi relation.  Indeed, $L_\infty$-algebras are sometimes defined as graded vector spaces with such a square zero coderivation on the total symmetric algebra, see \cite{LodayVallette}.  As such, definitions for $L_\infty$-algebra adjacent constructions are often given in terms of the dg coalgebra $C_\ast (\cL)$.

\begin{defn}
Let $\cL_1$ and $\cL_2$ be $L_\infty$-algebras.  \emph{A map of $L_\infty$-algebras}, $F \colon \cL_1 \rightsquigarrow \cL_2$, is given by a map $F \colon C_\ast (\cL_1) \to C_\ast (\cL_2)$ of dg coalgebras.  A map $F \colon \cL_1 \rightsquigarrow \cL_2$ is \emph{strict} if it is determined by the linear map  $F_1 \colon \cL_1 [1]= \Sym^1(\cL_1[1]) \to \cL_2[1]$.
\end{defn}


Below, we will formulate \emph{action complexes} in terms of the algebras of cochains on local $L_\infty$-algebras, so we recall those definitions next.  Further details are provided in Sections 3.4 and 3.5 of \cite{CG2}.

\begin{defn}
Let $\cL$ be a local $L_\infty$-algebra over the manifold $M$.  Define the \emph{Lie algebra cohomology complex of $\cL$} to be the dg algebra
\[
C^\ast (\cL) := \prod_{n \ge 0} \mathrm{Hom} ((\cL[1])^{\otimes n} , \RR)_{S_n},
\]
where the right-hand side is the $S_n$-coinvariants of continuous linear maps from the completed projective $n$-fold tensor product of $\cL$ (shifted) with itself.
\end{defn}

Note that an $L_\infty$ map $F \colon \cL_1 \rightsquigarrow \cL_2$ determines a map of dg algebras $C^\ast (\cL_2) \to C^\ast (\cL_1)$ (and vice-versa under appropriate finiteness assumptions). Often times we simply write $C^\ast (\cL) = \widehat{\Sym}((\cL[1])^\vee)$, which is correct, if a bit ambiguous.  The definition above specifies in which category and with respect to which topology we are completing the symmetric algebra. The differential on $C^\ast (\cL)$ is the ``usual" Chevalley--Eilenberg differential which is dual to the coderivation given in the definition of $C_\ast (\cL)$.  Next, we will kill constant functions. 

\begin{defn}
Let $\cL$ be a local $L_\infty$-algebra over the manifold $M$.  The \emph{reduced Lie algebra cohomology complex of $\cL$} is given by
\[
C^\ast_{\text{red}} (\cL) := \ker (C^\ast (\cL) \to \RR),
\]
where the map on the right-hand side is the projection to the $n=0$ factor (also known as the augmentation map).
\end{defn}

We would also like a version of Lie algebra cohomology which corresponds to \emph{local functionals}, not just all continuous linear maps.  To accomplish this we will consider infinite jet bundles, so if $L$ is (graded) vector bundle, over $M$, then
\[
J(L)^\vee := \mathrm{Hom}_{C^\infty_M} (J(L), C^\infty_M) \quad \text{ and } \quad \sO_{\text{red}}(J(L)) := \prod_{n >0} \mathrm{Hom}_{C^\infty_M} (J(L)^{\otimes n} , C^\infty_M)_{S_n},
\]
are (pre)sheaves on $M$.  Actually, they have more structure, since the algebra of differential operators on $M$, $D_M$, naturally acts on both of these objects.

\begin{defn}
Let $L$ be a graded vector bundle over the manifold $M$.  The \emph{space of local functionals on $L$} is given by
\[
\sO_{\text{loc}} (L) := \mathrm{Dens}_M \otimes_{D_M} \sO_{\text{red}} (J(L)).
\]
\end{defn}

The space of local functionals, $\sO_{\text{loc}}(L)$, is a quotient of the perhaps more familiar space of Lagrangian densities $\mathrm{Dens}_M \otimes_{C^\infty_M} \sO_{\text{red}} (J(L))$ where two densities which differ by a total derivative are identified.

Now, if $\cL$ is a local $L_\infty$-algebra over $M$, then $J(L)$ also has the structure of an $L_\infty$-algebra (in the category of $D_M$ modules with symmetric monoidal structure given by tensoring over $C^\infty_M$).  Hence, the dg algebras $C^\ast (J(L))$ and $C^\ast_{\text{red}} (J(L))$ exist.

\begin{defn}
Let $\cL$ be a local $L_\infty$-algebra over the manifold $M$.  The \emph{local Lie algebra cohomology complex of $\cL$} is given by
\[
C^\ast_{\text{red,loc}} (\cL) := \mathrm{Dens}_M \otimes_{D_M} C^\ast_{\text{red}}(J(L)).
\]
\end{defn}

It follows from our earlier discussion that if $\cL_1 \rightsquigarrow \cL_2$ is a map of local $L_\infty$-algebras which is an inclusion at the level of vector bundles, then there is a map of cochain complexes $C^\ast_{\text{red,loc}} (\cL_2) \to C^\ast_{\text{red,loc}} (\cL_1)$. (The process of taking the tensor product $\mathrm{Dens}_M \otimes_{D_M} -$ can fail to preserve algebra structures, so these are only cochain complexes.)

\begin{prop}[Lemma 3.5.4 of \cite{CG2}]
If $M$ is oriented and $\cL$ is a local $L_\infty$-algebra on $M$, then there is a natural quasi-isomorphism
\[
C^\ast_{\text{red,loc}} (\cL) \cong \Omega^\ast (M, C^\ast_{\text{red}} (J(L)))[\dim M].
\]
\end{prop}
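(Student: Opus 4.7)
The plan is to leverage the classical identity in $D$-module theory that for any left $D_M$-module $V$ there is a canonical quasi-isomorphism
\[
\mathrm{Dens}_M \otimes^{\mathbb{L}}_{D_M} V \simeq \Omega^\ast(M, V)[\dim M],
\]
obtained by realizing the left-hand side as the de Rham functor on $D$-modules. Applied to $V = C^\ast_{\text{red}}(J(L))$, which carries its natural $D_M$-module structure coming from the flat Cartan connection on the jet bundle $J(L)$, the right-hand side is precisely the target of the proposition. Since $C^\ast_{\text{red,loc}}(\cL)$ is by definition the underived tensor product $\mathrm{Dens}_M \otimes_{D_M} C^\ast_{\text{red}}(J(L))$, the proof reduces to two things: (i) producing the quasi-isomorphism above via an explicit resolution, and (ii) checking that in this setting the underived tensor product computes the derived one.

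For step (i), I would use the Spencer (or de Rham--Koszul) resolution of $\mathrm{Dens}_M$ as a right $D_M$-module. After identifying $\mathrm{Dens}_M \cong \Omega^{\dim M}_M$ via the orientation, this is the complex
\[
\Omega^{0}_M \otimes_{C^\infty_M} D_M \to \Omega^{1}_M \otimes_{C^\infty_M} D_M \to \cdots \to \Omega^{\dim M}_M \otimes_{C^\infty_M} D_M,
\]
placed in cohomological degrees $-\dim M, \ldots, 0$, with differential built from the de Rham $d$ together with the right $D_M$-action, and with augmentation onto $\mathrm{Dens}_M$ in degree zero. Tensoring term-by-term over $D_M$ with $V$ cancels the $D_M$-factors, yielding the complex $\Omega^\ast(M, V)$ sitting in degrees $-\dim M, \ldots, 0$ with the twisted de Rham differential, which is precisely $\Omega^\ast(M, V)[\dim M]$.

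For step (ii), one must verify that tensoring the Spencer resolution with $V = C^\ast_{\text{red}}(J(L))$ over $D_M$ preserves acyclicity. The key fact is that jet bundles $J(L)$ are locally free (of countably infinite rank, via the filtration by jet order) as $C^\infty_M$-modules and carry the canonical flat connection; consequently the $C^\infty_M$-linear duals, the tensor powers $\mathrm{Hom}_{C^\infty_M}(J(L)^{\otimes n}, C^\infty_M)$, the $S_n$-coinvariants, and finally the product $\prod_{n}$ defining $C^\ast_{\text{red}}(J(L))$ all inherit flat $D_M$-module structures for which the relevant higher Tors against the Spencer resolution vanish. This is the main technical obstacle: the argument requires careful bookkeeping with completed topological tensor products and justifying that taking products and coinvariants commutes with the Spencer differentials --- exactly the role played by the functional-analytic framework (nuclear Fr\'echet spaces, convenient vector spaces, etc.) set up in Chapter 5 of \cite{CG2}. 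Naturality in $\cL$ is then automatic, since the Spencer resolution is independent of $\cL$ and both sides are functorial in the $D_M$-module $V = C^\ast_{\text{red}}(J(L))$.
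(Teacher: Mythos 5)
First, a point of comparison: the paper does not actually prove this statement --- it is imported verbatim as Lemma 3.5.4 of \cite{CG2} --- so the only meaningful benchmark is the argument in that source. Your overall strategy (resolve $\mathrm{Dens}_M$ by the Spencer complex of free right $D_M$-modules, identify $\mathrm{Dens}_M \otimes^{\mathbb{L}}_{D_M} V$ with the shifted de Rham complex of the left $D_M$-module $V$, then argue that for $V = C^\ast_{\text{red}}(J(L))$ the underived tensor product already computes the derived one) is exactly the route taken there, and your step (i), including the degree bookkeeping against the paper's shift conventions, is correct.

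The genuine gap is in step (ii), and it is not just an omitted verification: the general principle you invoke is false. You claim that because $J(L)$ is locally free over $C^\infty_M$ and carries a flat connection, its duals, tensor powers, coinvariants and products have vanishing higher $\mathrm{Tor}^{D_M}(\mathrm{Dens}_M,-)$. Take $V = C^\infty_M$ itself: it is free of rank one over $C^\infty_M$ with its canonical flat connection, yet $\mathrm{Dens}_M \otimes^{\mathbb{L}}_{D_M} C^\infty_M \simeq \Omega^\ast(M)[\dim M]$ computes the full de Rham cohomology of $M$, while the underived tensor product is only $H^{\dim M}_{\mathrm{dR}}(M)$; the higher Tor groups are the lower de Rham cohomology groups and do not vanish for, say, $M = S^1 \times \RR^3$. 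So local freeness over $C^\infty_M$ plus a flat connection buys you nothing. What actually makes the lemma true is the specific $D_M$-module structure of the jet modules: $J(L)^\vee \cong \mathrm{Diff}(L, C^\infty_M)$ and, more generally, $\mathrm{Hom}_{C^\infty_M}(J(L)^{\otimes n}, C^\infty_M)$ (polydifferential operators) are \emph{induced} $D_M$-modules, i.e.\ locally of the form $D_M \otimes_{C^\infty_M} W$ with $W$ flat over $C^\infty_M$ (by the tensor identity this is stable under tensor powers, and in characteristic zero under $S_n$-coinvariants), and for induced modules one has $\mathrm{Dens}_M \otimes_{D_M}(D_M \otimes_{C^\infty_M} W) \cong \mathrm{Dens}_M \otimes_{C^\infty_M} W$ with no higher Tor. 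Equivalently, this is the acyclicity of the interior rows of the variational bicomplex. That identification --- not the functional-analytic bookkeeping you defer to --- is the real content of the lemma in \cite{CG2}, and it is the step your proposal is missing.
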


%


\section{Preliminaries II: Actions of Local $L_\infty$-algebras}

The material of this section is a (mild) repackaging of Part 3 of \cite{CG2}.  

There are two complementary motivations for the definitions and constructions of this section. The first is the observation that if $\mathfrak{m}$ is a module for a Lie algebra $\fg$, then there is an extension of Lie algebras
\[
\mathfrak{m} \to \fg \ltimes \mathfrak{m} \to \fg,
\]
where $\fg \ltimes \mathfrak{m}$ has underlying vector space $\fg \oplus \mathfrak{m}$ and Lie bracket
\[
[(X_1,m_1),(X_2,m_2)]_\ltimes :=([X_1,X_2]_\fg , X_1 \cdot m_2 - X_2 \cdot m_1).
\]
One can verify that this construction can actually be enhanced to an equivalence between $\fg$-modules and certain Lie algebra extensions of $\fg$. Further, this notion of action as extension makes sense for dglas and $L_\infty$-algebras, and, as we discuss at the end of the section, has appeared in the literature for some time.  The definition we recall below will enhance this existing definition by 1) extending the domain of definition to include local $L_\infty$-algebras, 2) incorporating cyclic structure, and 3)  the module $\mathfrak{m}$ to be equipped with its own bracket structure.

Secondarily, if $(X,\omega)$ is a symplectic manifold and $\fg$ is a Lie algebra, an action of $\fg$ on $X$ \emph{compatible} with $\omega$ is a map of Lie algebras $\alpha \colon \fg \to \mathrm{SympVect}(X,\omega)$, where $\mathrm{SympVect}(X,\omega)$  is the Lie algebra of vector fields preserving $\omega$, i.e., the Lie derivative of $\omega$ along such a vector field vanishes. The action defined by $\alpha$ is \emph{Hamiltonian} if it actually factors through $C^\infty_X$, and hence contained in the image of Hamiltonian vector fields. In this latter case, the map $\widetilde{\alpha} \colon \fg \to C^\infty_X$ is the dual to the moment map if the Lie algebra action came from a Hamiltonian group action.

In our setting of interest, the object being acted upon is not a symplectic manifold, but rather a cyclic local $L_\infty$-algebra $\cM$. To such an object there is a ``generalized" space, $B \cM$, whose algebra of functions is given by $C^\ast (\cM)$.  This perspective is used throughout \cite{CG2}, see also \cite{GG2} for a more concrete description and relationship to Lie algebroids.
Hence, we will phrase actions of Lie algebras on $\cM$ as maps to vector fields/functions on $B \cM$, so the latter is a map $\widetilde{\alpha} \colon \fg \to C^\ast (\cM)$.

Our discussion of extensions leads to the following definition.


\begin{defn}[Definition 12.2.1 of \cite{CG2}]
Let $(\cM, \{\ell_k\},\langle-,-\rangle)$ be a local cyclic $L_\infty$-algebra and $\cL$ a local $L_\infty$-algebra over the same  manifold $M$. An \emph{action of $\cL$ on $\cM$} is an $L_\infty$-algebra structure on $\cL \oplus \cM$ which fits into an exact sequence of sheaves of $L_\infty$-algebras
\[
0 \to \cM \to \cL \oplus \cM \to \cL \to 0.
\]
\end{defn}
Here, exactness means exactness of the underlying sequences of sheaves of graded vector spaces, and the maps are strict $L_\infty$-morphisms.

As the authors note, one consequence is that $\cM$ is both an $L_\infty$-subalgebra and an ideal for the $L_\infty$-structure on $\cL \oplus \cM$.  Hence, for $n \ge 1$, the  brackets on $\cL \oplus \cM$, $\{\ell^\ltimes_k\}$, determine maps (and are determined by maps), which we abusively continue to denote by
\[
\ell^\ltimes_{r+s} \colon \cL^{\otimes r} \otimes \cM^{\otimes s} \to \cM\quad r\ge 0, s\geq 1.
\]

We can further ask that the action be compatible with the pairing, or equivalently that it preserve the cyclic structure on $\cM$. Following our discussion in Section \ref{sect:LCFT} we give the following definition.

\begin{defn}\label{def:comp}
An action of $\cL$ on $(\cM, \{\ell_k\},\langle-,-\rangle)$ \emph{is compatible with the pairing} if for all $s \ge 1$ and all compactly supported sections $\{X_1, \dotsc, X_r\}$ of $\cL$ and $\{m_1 , \dotsc , m_s , m_{s+1}\}$ of $\cM$, the expression
\[
\langle \ell^\ltimes_{r+s} (X_1 , \dotsc , X_r , m_1 , \dotsc , m_s),m_{s+1} \rangle
\]
is graded anti-symmetric with respect to permutation of the variables $m_1,...,m_s,m_{s+1}$.
\end{defn}


\begin{ex}\label{ex:dgAdj}
Let $(\cM, \{\ell_k\},\langle-,-\rangle)$ be a cyclic local $L_\infty$-algebra. Let $\cM^\llcorner$ be the cyclic local $L_\infty$-algebra which has the same underlying graded vector space, differential and pairing, but for which all (higher) brackets are zero, i.e., $\cM^\llcorner$ is simply a cochain complex with a pairing.  Then $\cM$ acts on $\cM^\llcorner$ via the \emph{dg-adjoint} action 
with brackets
\[
\ell^\ltimes_{r+1} \colon \cM^{\otimes r} \otimes \cM^\llcorner \to \cM^\llcorner, \quad \ell^\ltimes_{r+1} (m_1 , \dotsc , m_r, m_1') = \ell_{r+1} (m_1 , \dotsc , m_r, m_1'),
\]
for $r \ge 1$ and one additional bracket when $r=0$
\[
\ell_1^\ltimes \colon \cM^\llcorner \to \cM^\llcorner , \quad \ell_1^\ltimes (m') = \ell_1 (m').
\]
(So brackets vanish if there is more than one element from $\cM^\llcorner$.)
This action is a compatible action.
\end{ex}


\subsection{Action Complexes}

As a warmup, let $(\fg, \langle -,-\rangle)$ be an Abelian Lie algebra (a vector space) with a pairing (we will ignore degree for a moment).  The Lie algebra cochains, $C^\ast (\fg)$, is equipped with a Lie bracket coming from the pairing. Indeed, the pairing induces an isomorphism $\Lambda^k \fg^\vee \cong \Lambda^{k-1} \fg^\vee \otimes \fg$. So given $f,g \in \Lambda^\bullet \fg^\vee$, we view them as functionals $f,g \colon \Lambda^{\bullet-1} \fg \to \fg$.  We then define 
\[
(f \circ g) = \sum_{\sigma\in{\rm Sh}(r+1,s)} {\rm sgn}(\sigma)f( g(X_{\sigma(0)}, \dotsc, X_{\sigma(r)}), X_{\sigma(r+1)}, \dotsc , X_{\sigma(r+s)}),
\]
where $g$ has arity $r+1$ and $f$ has arity $s+1$.  Then define $[f,g] = f \circ g -(-1)^{rs} g \circ f$.  This makes $\Lambda^\bullet \fg^\vee$ a dgla with zero differential. Finally, note that an element $\ell \colon \Lambda^2 \fg \to \fg$ such that $[\ell,\ell]=0$ is precisely a Lie bracket on $\fg$ which is compatible with the pairing $\langle -,-\rangle$. 

Under appropriate grading conventions, $\ell$ from above corresponds to a Maurer--Cartan element.  (The appropriate shift is $[2+\deg (\langle-,-\rangle)]$.  In what follows, we will restrict to the case that the pairing has degree $-3$, so there will be many shifts to the right by 1 (a $[-1]$ shift)).  The situation for other degree pairings is similar with slightly different degree shifts.

Taking care with grading, signs, and internal differentials, the preceding paragraph can be replicated in the setting of $L_\infty$-algebras.  In particular, if $(\cM, \{\ell_k\},\langle-,-\rangle)$ is a cyclic $L_\infty$-algebra, then $C^\ast(\cM)[-1]$ is a differential graded Lie algebra.  Further, if $\cM$ is local, then $C^\ast_{\text{loc}}(\cM)[-1]$ (and its reduced version) are also dglas. 


We have already seen that an action of a local $L_\infty$-algebra $\cL$ on $\cM$ consists, in part, of a $L_\infty$-structure on $\cL \oplus \cM$.  If we further require the action to be compatible with the pairing on $\cM$, then such an action defines a Maurer--Cartan element 
\[
\alpha \in C^\ast_{\text{red,loc}} (\cL \oplus \cM)[-1] \subset C^\ast (\cL) \otimes C^\ast_{\text{red,loc}} (\cM)[-1].
\]
(In the local setting the tensor product appearing is the completed projective tensor product.) The appearance of reduced cochains is explained by the fact that we don't want curved actions with constant components.

Similarly, if we expect to have a short exact sequence (extension) of local $L_\infty$-algebras as in the previous section, we need $\cM$ to sit inside of the $L_\infty$-algebra $\cL \oplus \cM$ as a subalgebra (and an ideal).  In particular, the $L_\infty$-structure restricts to the one already on $\cM$.  This is imposed by asking that our Maurer--Cartan element is in the kernel of the natural map $\rm res_{\cM} \colon C^\ast_{\text{red,loc}} (\cL \oplus \cM) [-1] \to C^\ast_{\text{red,loc}}(\cM)[-1]$.

\begin{defn}\label{def:Ham}
Let $(\cM, \{\ell_k\},\langle-,-\rangle)$ be a local cyclic (of degree $-3$) $L_\infty$-algebra and $\cL$ a local $L_\infty$-algebra over the same  manifold $M$.  The \emph{Hamiltonian action complex} of $\cL$ acting on $\cM$ is given by
\[
\mathrm{Ham}(\cL,\cM) := \ker \left (C^\ast_{\text{red,loc}} (\cL \oplus \cM) [-1] \xrightarrow{\; \rm res_{\cM} \;} C^\ast_{\text{red,loc}}(\cM)[-1] \right ).
\]
\end{defn}

Note that Costello and Gwilliam refer to this complex as \emph{the complex of ``inner" actions} and denote it $\mathrm{InnerAct}(\cL,\cM)$.  The proposition that follows is one justification for our notational change.  Additionally, there are silly examples of Lie algebra actions, e.g., (algebraic) vector fields acting on the Abelian Lie algebra $\cM = \RR[x,y] \oplus \RR[x^\vee,y^\vee][-3]$, which are Hamiltonian but not inner in a naive sense.

Note that $C^\ast_{\text{red,loc}} (\cL)[-1]$ is a subcomplex of $\mathrm{Ham}(\cL,\cM)$ which is contained in the center with respect to the Lie bracket.

\begin{defn}
Let $(\cM, \{\ell_k\},\langle-,-\rangle)$ be a local cyclic (of degree $-3$) $L_\infty$-algebra and $\cL$ a local $L_\infty$-algebra over the same  manifold $M$.  The \emph{symplectic action complex} of $\cL$ acting on $\cM$ is given by
\[
\mathrm{Act}(\cL,\cM) := \mathrm{Ham}(\cL,\cM)/ C^\ast_{\text{red,loc}} (\cL).
\]
\end{defn}

The discussion to this point (and Section 12.2 of \cite{CG2}) can be summarized as the following.

\begin{prop}[Lemma 12.2.5 of \cite{CG2}]\label{prop:Act}
Let $(\cM, \{\ell_k\},\langle-,-\rangle)$ be a local cyclic (of degree $-3$) $L_\infty$-algebra and $\cL$ a local $L_\infty$-algebra over the same  manifold $M$.  A compatible action of $\cL$ on $\cM$ is equivalent to a Maurer--Cartan element in $\mathrm{Act}(\cL,\cM)$.
\end{prop}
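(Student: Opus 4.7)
The plan is to establish a bijection between compatible actions of $\cL$ on $\cM$ and Maurer--Cartan elements in $\mathrm{Act}(\cL,\cM)$, by using the nondegenerate pairing on $\cM$ to convert brackets into functionals, and then matching the generalized Jacobi identities with the Maurer--Cartan equation. I would first decompose $C^\ast_{\text{red,loc}}(\cL \oplus \cM)[-1]$ by the bidegree $(r,s)$ counting $\cL$- and $\cM$-inputs. With respect to this decomposition, $\mathrm{Ham}(\cL,\cM)$ is the $r \ge 1$ part, $C^\ast_{\text{red,loc}}(\cL)$ sits inside as the $(r,0)$ part (and is central because the dgla bracket needs an $\cM$-input to contract), and $\mathrm{Act}(\cL,\cM)$ is identified with the components $r \ge 1$, $s \ge 1$.

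Given a compatible action $\{\ell^\ltimes_k\}$, I would pair the $\cM$-component of each bracket's output against an extra $\cM$-argument:
\[
(X_1,\dots,X_r,m_1,\dots,m_s,m_{s+1}) \;\longmapsto\; \langle \ell^\ltimes_{r+s}(X_1,\dots,X_r,m_1,\dots,m_s), m_{s+1}\rangle,
\]
a functional of bidegree $(r,s+1)$. The pairing compatibility of Definition \ref{def:comp} is exactly the graded (anti)symmetry required for this to live in the appropriate coinvariants inside $C^\ast_{\text{red,loc}}(\cL \oplus \cM)[-1]$. Collecting the $r \ge 1$ pieces (the $r=0$ pieces recover $\ell^\cM_k$, which is already built into the differential on $C^\ast_{\text{red,loc}}(\cM)[-1]$) assembles a canonical element $S_{\cL,\cM} \in \mathrm{Ham}(\cL,\cM)$ supported in bidegrees with $r, s' \ge 1$, so it descends canonically to $\mathrm{Act}(\cL,\cM)$. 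Note that for $s = 0$ this encodes the $\cM$-valued piece of a pure-$\cL$ bracket, a legitimate new datum (the $\cL$-valued piece is fixed to $\ell^\cL_k$ by the exact-sequence condition).

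Next, I would verify that the generalized Jacobi identities for $\{\ell^\ltimes_k\}$ translate, under this correspondence, into the Maurer--Cartan equation $dS_{\cL,\cM} + \tfrac{1}{2}[S_{\cL,\cM},S_{\cL,\cM}] = 0$ in $\mathrm{Act}(\cL,\cM)$. The differential $d$ encodes the linear brackets $\ell^\cL_1$ and $\ell^\cM_1$, while the dgla bracket (induced by the $\cM$-pairing) encodes compositions of higher brackets through their $\cM$-output. The $(r,0)$-components of the Jacobi identities concern pure-$\cL$ inputs with $\cL$-valued output; these are automatically handled by $\cL$'s ambient $L_\infty$-structure rather than by the action data, which is precisely why the quotient by $C^\ast_{\text{red,loc}}(\cL)$ is the correct normalization. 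For the reverse direction, I would take a Maurer--Cartan class in $\mathrm{Act}(\cL,\cM)$, pick the unique lift to $\mathrm{Ham}(\cL,\cM)$ supported in bidegrees with $s \ge 1$, and use nondegeneracy of the $\cM$-pairing to invert the above construction and extract brackets $\{\ell^\ltimes_k\}$; membership in $\mathrm{Ham}$ forbids any modification of $\ell^\cM_k$, realizing the exact-sequence condition.

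The main obstacle is the bookkeeping: matching Koszul signs under symmetrization, tracking the $[-1]$ shift appropriate to the degree $-3$ pairing, and aligning the coefficients in the generalized Jacobi identity with those in the Maurer--Cartan equation term-by-term in bidegree. This is essentially a cyclic enhancement of the standard correspondence between $L_\infty$-structures on a graded vector space and square-zero coderivations on the cofree cocommutative coalgebra, adapted to the setting where only $\cM$ (and not $\cL$) is equipped with a nondegenerate pairing; the verification is routine once the bidegree bookkeeping is organized, and is consistent with the general discussion of Section 3 of \cite{CG2}.
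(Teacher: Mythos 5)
Your proposal is correct and follows essentially the same route as the paper (which itself only sketches the argument and defers to Lemma 12.2.5 of \cite{CG2}): use the nondegenerate degree $-3$ pairing to convert each mixed bracket into a cochain, observe that Definition \ref{def:comp} is exactly the symmetry needed for it to live in $C^\ast_{\text{red,loc}}(\cL\oplus\cM)[-1]$, identify the kernel condition defining $\mathrm{Ham}$ with the requirement that $\cM$ be a subalgebra, and match the generalized Jacobi identities with the Maurer--Cartan equation after quotienting out the central purely-$\cL$ components. The only quibble is that the differential on $\mathrm{Act}(\cL,\cM)$ carries the full Chevalley--Eilenberg differentials of $\cL$ and $\cM$ (the paper writes $d_\cM=\{S,-\}$), not just the linear brackets, but this is a bookkeeping convention that does not affect the argument.
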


As $\mathrm{Act}(\cL,\cM)$ is a quotient of $\mathrm{Ham}(\cL,\cM)$, Maurer--Cartan elements of $\mathrm{Ham}(\cL,\cM)$ map to Maurer--Cartan elements of $\mathrm{Act}(\cL,\cM)$; as one would expect, since Hamiltonian actions are in particular symplectic actions.  Lifting a Maurer--Cartan element of $\mathrm{Act}(\cL, \cM)$ to one in $\mathrm{Ham}(\cL,\cM)$ is obstructed by a class in $H^1 (C^\ast_{\text{red,loc}} (\cL))$.

Let us (briefly) explain how the work of this section is connected to the second point of motivation articulated in the previous section.
The punchline is that, suppressing shifts, a Maurer--Cartan element $S^\cL \in \mathrm{Act}(\cL,\cM)$ determines a map $\cL \to \mathrm{SympVect}(B \cM) \cong C^\ast_{\text{red,loc}} (\cM)$, while a Maurer--Cartan element $\widetilde{S^\cL} \in \mathrm{Ham}(\cL,\cM)$ is a map $\cL \to \sO_{\text{loc}}(B \cM) \cong C^\ast_{\text{loc}} (\cM)$.

The key to the preceding paragraph is a form of Koszul duality (see Section 11.2 of \cite{CG2}).  Recall that a map of (ordinary) Lie algebras $F \colon \fg \to \fh$ is the same as a map of dg algebras $F^\ast \colon C^\ast (\fh) \to C^\ast(\fg)$ which in turn can be realized as (and realizes) a Maurer--Cartan element $S^F \in C^\ast(\fg)\otimes \fh$, where $C^\ast(\fg)\otimes \fh$ is the dgla which combines the dg algebra structure on $C^\ast(\fg)$ with the Lie bracket on $\fh$ entirely analogously to how we previously described the structure of the dgla of forms valued in a Lie algebra $\Omega^\ast(M,\fg)$.  Also, note in this setting there is no difference between Maurer--Cartan elements in $C^\ast(\fg)\otimes \fh$ and $C^\ast_{\text{red}} (\fg) \otimes \fh$, the latter is more useful moving forward.

This equivalence just described persists to the level of $L_\infty$-algebras. Suppose the base manifold $M$ is a point. Now, $\mathrm{SympVect}(B\cM) \cong C^\ast_{\text{red}}(\cM)[-1]$ is a dgla (as we discussed at the beginning of the subsection), so a map of $L_\infty$-algebras $\cL \rightsquigarrow C^\ast_{\text{red}}(\cM)[-1]$ is a Maurer--Cartan element in 
\[
C^\ast_\text{red} (\cL) \otimes C^\ast_\text{red} (\cM)[-1]  \cong (C^\ast_\text{red}(\cL) \otimes C^\ast (\cM)[-1])/C^\ast_\text{red}(\cL)[-1] \cong \mathrm{Act}(\cL,\cM).
\]
Next, at the level of cochain complexes (but not dglas), we have 
\[
\mathrm{Act}(\cL,\cM) \cong C^\ast_\text{red}(\cL \oplus \cM)[-1] /\left (C^\ast_\text{red}(\cL)[-1] \oplus C^\ast_\text{red}(\cM)[-1] \right ),
\]
so lifting the map $\cL \rightsquigarrow \mathrm{SympVect} (B \cM)$ to a map $\cL \rightsquigarrow \sO(B \cM)$ (still at the level of complexes) should just be a lift of the Maurer--Cartan element to a Maurer--Cartan element in $C^\ast_\text{red}(\cL \oplus \cM)[-1] / (C^\ast_\text{red}(\cL)[-1])$. Unfortunately, while morally correct, these identifications don't hold at the dgla level as $C^\ast_\text{red} (\cM)[-1]$ is not an ideal of $C^\ast_\text{red} (\cL \oplus \cM)[-1]$.  To fix this technicality, we construct $\mathrm{Ham}(\cL,\cM)$ first as a kernel (so subalgebra, not quotient) and then construct the map to $\mathrm{Act}(\cL,\cM)$. Finally, to move to the more general setting where $M$ is not necessarily a point, we just need to use local cochains.


\subsection{Equivariant Action Functionals}

Previously, in Section \ref{sect:LCFT}, we saw that local cyclic $L_\infty$-algebras present classical field theories.  We will now see that compatible actions of $L_\infty$-algebras on such objects lead to equivariant actions/theories.  This just boils down to an interpretation of the work in the previous subsection.

Let $\cL$ be a local $L_\infty$-algebra acting (compatibly) on a local cyclic (of degree $-3$) $L_\infty$-algebra $(\cM, \{\ell_k\},\langle-,-\rangle)$.  So via Proposition \ref{prop:Act}, the action is encoded by a Maurer--Cartan element $S^\cL \in \mathrm{Act}(\cL,\cM)$.  Now, let\footnote{One should be careful if $M$ is not compact and take compactly supported sections at this point, but this causes no further difficulty so we suppress this issue.}
\[
S \in \sO_\text{loc} (\cM[1]) \subset C^\ast_\text{red,loc}(\cM)[-1] \subset C^\ast_\text{red,loc} (\cL \oplus \cM)[-1],
\]
be the classical action on $\cM[1]$ or equivalently, the Maurer-Cartan element that encodes the local $L_\infty$-algebra structure of $\cM$.  Consider the element
\[
S^\text{tot} := S^\cL + S \in C^\ast_\text{red,loc}(\cL \oplus \cM)[-1]/(C^\ast_\text{red,loc} (\cL)[-1]).
\]
That $S^\text{tot}$ is a Maurer--Cartan element, with respect to $d_\cL + d_\cM$ and the bracket induced by the pairing on $\cM$, follows from $S$ and $S^\cL$ being Maurer--Cartan elements.  Indeed, since $d_\cM = \{S,-\}$, we have that
\[
d_\cL S^\text{tot} +\frac{1}{2} \{S^\text{tot},S^\text{tot}\} = d_\cL S^\cL + d_\cM S^\cL + \frac{1}{2} \{S^\cL, S^\cL\} +\frac{1}{2} \{S,S\},
\]
where the last summand is zero as $S$ satisfies the original CME (equivalently, actually defines an $L_\infty$-structure).

To summarize, if $\cL$ acts (compatibly) on $\cM$ with corresponding Maurer--Cartan element $S^\cL$, and $S \in \sO_\text{loc} (\cM[1])$ is our original action functional/$L_\infty$-structure, then $S^\text{tot} = S^\cL + S$ is an action functional on the fields $\cL[1] \oplus \cM[1]$ and satisfies an equivariant CME. Since Hamiltonian actions are in particular compatible actions, similarly they define equivariant action functionals as well.

\begin{ex}\label{ex:adjFun}
Let us return to the setting of Example \ref{ex:dgAdj}, and  let us again specialize to the cyclic dgla $\cM := \Omega^\ast (M, \fg)$, where $M$ is a closed oriented 3-manifold and $\fg$ is a (finite dimensional) Lie algebra equipped with invariant pairing, e.g., $\fg$ is semi-simple and we consider the Killing form.  Our classical action is that of perturbative Chern--Simons, so 
\[
S(A) = \int_M \frac{1}{2} \langle A, d A \rangle + \frac{1}{6} \langle A, [A,A] \rangle .
\]
Considering the dg-adjoint case, where our classical theory is just given by $\cM^\llcorner$, the cyclic $L_\infty$-algebra with only one nonzero bracket, the differential $\ell_1=d$.  Therefore, our action is
\[
S^\llcorner (A) = \int_M \frac{1}{2} \langle A, dA \rangle.
\]
As the mixed brackets of our $L_\infty$-structure on $\cM \oplus \cM^\llcorner$ are the same as before, our $S^\cL$ doesn't change. Hence, our total action is given as
\[
S^\text{tot}_\text{dgAdj} (B,A) = S^\cL(B,A) + S^\llcorner(A) = \int_M \frac{1}{2} \langle A, dA \rangle + \frac{1}{6} \langle A, [B,A] \rangle.
\]

\end{ex}

\subsection{Quick Comparison to Others}\label{sect:compare}

As mentioned previously, there has been a notion of action by an $L_\infty$-algebra on different types of objects for more than 30 years.  Most relevantly, by considering extensions similar to above, Mehta and Zambon \cite{MZ12} gave a definition of an $L_\infty$-algebra action which subsumed the original definition of $L_\infty$-modules of Lada and Markl \cite{LM95}. Their definitions worked well for (finite dimensional) dg-modules and their Proposition 6.1 introduced the dg-adjoint action of Example \ref{ex:dgAdj}.  Next, Vitagliano \cite{Vit15} gave a definition of representation of Lie--Rinehart algebras that generalized Mehta--Zambon and the notion of representation up to homotopy of a Lie algebroid.  While finite dimensionality was not imposed, the algebras were not cyclic, nor local. (Vitagliano's construction is compatible with the formalism of \cite{GG2} which relates Lie algebroids and local $L_\infty$-algebras.) The definition we give above from Costello--Gwilliam is for local cyclic $L_\infty$-algebras with no assumptions of finite dimensionality and strictly generalizes the previous definitions.

To our knowledge the full infinity adjoint action which we describe next has not  explicitly appeared in the literature.

\section{The Infinity Adjoint Action}\label{sect:infinity}

In Example \ref{ex:adjFun}, we saw the equivariant action functional associated to an $L_\infty$-algebra acting on itself via the dg-adjoint action.  This action was a functional on two copies of fields and we typically view the fields coming from the $L_\infty$-algebra which is doing the action as ``background fields."  When we charge our system with respect to a zero background field, we would like to recover our original action coming from the cyclic $L_\infty$-algebra which is being acted upon.
The dg-adjoint action fails to accomplish this wish, but there is a third action which does: the \emph{infinity adjoint action}.

Returning to our running example where $\cM := \Omega^\ast (M, \fg)$, for $M$  a closed oriented 3-manifold and $\fg$ a semi-simple Lie algebra, for the infinity adjoint action, we will have an equivariant action functional

\[
S^\text{tot}_{\infty \text{Adj}} (B,A) = S^\cL(B,A) + S^\cM(A) = \int_M  \frac{1}{6} \langle A, [B,A] \rangle + \frac{1}{2} \langle A, dA \rangle + \frac{1}{6} \langle A, [A,A] \rangle .
\]

Our aim of the present section is to prove the following result.

\begin{thm}\label{thm:main} Let $(\cM, \{\ell_k\},\langle-,-\rangle)$ be a local cyclic $L_\infty$-algebra of degree $-3$.  There is an action of $\cM$ on itself, the \emph{infinity adjoint action}, such that
\begin{itemize}
\item[(a)] The action is compatible with the pairing on $\cM$;
\item[(b)] The action is Hamiltonian; and
\item[(c)] The equivariant action is of the form $S^\text{tot}=S^\cL + S^\cM$, with $S^\cM$ encoding the $L_\infty$-structure on $\cM$.  In particular, $S^\text{tot}$ recovers the underlying classical action defined by $\cM$ when the background fields are set to zero.
\end{itemize}
\end{thm}

We immediately obtain the following corollary.  While the corollary seems a bit redundant, as BV theories already contain their gauge symmetries, we will see that it is useful for computing conserved currents and charges.

\begin{cor}
Let $(\cM, \{\ell_k\},\langle-,-\rangle)$ present a classical BV theory.  Then, $\cM$ acts on this BV theory by classical symmetries, as does any local $L_\infty$-subalgebra of $\cM$.
\end{cor}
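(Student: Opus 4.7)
The plan is to realize the infinity adjoint action as a genuine $L_\infty$-structure on $\cM \oplus \cM$ obtained via a change of coordinates from the obviously valid product structure $\ell_k \times \ell_k$. Concretely, I would define
\[
\ell^{\mathrm{tot}}_k\bigl((m_1,n_1),\dots,(m_k,n_k)\bigr) := \bigl(\ell_k(m_1,\dots,m_k),\; \ell_k(m_1+n_1,\dots,m_k+n_k) - \ell_k(m_1,\dots,m_k)\bigr),
\]
and observe that the linear isomorphism $\Phi \colon \cM \oplus \cM \to \cM \oplus \cM$, $(m,n) \mapsto (m, m+n)$, intertwines $\{\ell^{\mathrm{tot}}_k\}$ with the product brackets $\{\ell_k \times \ell_k\}$. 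The generalized Jacobi identities for $\ell^{\mathrm{tot}}_k$ then follow at once from those of the product (which is trivially an $L_\infty$-algebra) via transport of structure along $\Phi$. Moreover, the maps $\iota \colon n \mapsto (0,n)$ and $\pi \colon (m,n) \mapsto m$ are strict $L_\infty$-morphisms with $\mathrm{image}(\iota) = \ker(\pi)$, yielding the required short exact sequence of sheaves of local $L_\infty$-algebras.

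Part (a) is then immediate: the mixed brackets $\ell^{\ltimes}_{r+s}(m_1,\dots,m_r,n_1,\dots,n_s)$ with $r, s \geq 1$ coincide with the original $\ell_{r+s}$ on the underlying sections, so the compatibility condition of Definition \ref{def:comp} reduces to the cyclicity of $\langle -, - \rangle_L$ on $\cM$. For part (b), Proposition \ref{prop:Act} turns the structure above into a Maurer--Cartan element of $\mathrm{Act}(\cM,\cM)$. I would then exhibit an explicit lift to $\mathrm{Ham}(\cM,\cM)$ of the form
\[
S^\cL(B,A) \;=\; \sum_{r \geq 1,\, s \geq 1} c_{r,s}\, \langle A,\, \ell_{r+s}(\underbrace{B,\dots,B}_{r},\underbrace{A,\dots,A}_{s}) \rangle,
\]
with combinatorial weights $c_{r,s}$ fixed by cyclic symmetrization. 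Writing down such a lift explicitly shows that the relevant obstruction class in $H^1(C^\ast_{\text{red,loc}}(\cM))$ vanishes, so the action is Hamiltonian.

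For part (c), take $S^{\mathrm{tot}} := S^\cL + S^\cM$ where $S^\cM(A) = \sum_{k \geq 1}\tfrac{1}{(k+1)!}\langle A,\ell_k(A,\dots,A)\rangle$ is the classical action presented by $\cM$. Since every summand of $S^\cL$ contains at least one background field $B$, setting $B = 0$ annihilates $S^\cL$ and recovers $S^{\mathrm{tot}}(0, A) = S^\cM(A)$ exactly. The main obstacle I anticipate is at the level of bookkeeping: pinning down the weights $c_{r,s}$ so that $S^\cL$ faithfully encodes the mixed brackets via the pairing, and then verifying the Maurer--Cartan equation for $S^\cL$ by matching multinomial coefficients against the generalized Jacobi identities for $\{\ell_k\}$. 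The $\Phi$-trick imports many structural identities from the product $L_\infty$-algebra essentially for free, but the passage from $\mathrm{Act}$ to $\mathrm{Ham}$ has genuine obstruction-theoretic content and forces the lift to be constructed by hand.
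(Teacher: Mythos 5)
Your construction is correct and produces exactly the paper's infinity adjoint action --- expanding $\ell_k(m_1+n_1,\dots,m_k+n_k)-\ell_k(m_1,\dots,m_k)$ by multilinearity gives precisely the sum over proper subsets $A\subsetneq\{1,\dots,k\}$ used to define $\ell_k^\ltimes$ --- but you reach it by a genuinely different route in two places. For the generalized Jacobi identities the paper expands the nested double sum directly and regroups it into a sum of Jacobi identities for $\cM$ indexed by subsets $D\subset\{1,\dots,n\}$; your transport of structure along $\Phi\colon(m,n)\mapsto(m,m+n)$ from the product brackets $\ell_k\times\ell_k$ yields the same conclusion essentially for free, and since $\Phi$ is a degree-zero bundle automorphism it also delivers locality and the strictness of $\iota$ and $\pi$ in one stroke; this is cleaner than the paper's computation. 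For the Hamiltonian property the trade goes the other way: the paper does \emph{not} build the lift by hand but invokes Lemma 12.2.3.4 of \cite{CG2}, noting that the failure of the naive lift of the Maurer--Cartan element from $\mathrm{Act}(\cM,\cM)$ to $\mathrm{Ham}(\cM,\cM)$ is a functional lying in $\sO_{\text{loc}}(\cL[1])$, i.e.\ depending only on background fields, while every mixed bracket has $s\ge 1$ and the pairing lives on $\cM$, so the obstruction vanishes for type reasons for non-curved actions. That argument sidesteps entirely the $c_{r,s}$ bookkeeping you identify as your main obstacle, and I would adopt it: as written, your part (b) is a plan rather than a proof, since the Maurer--Cartan equation for your proposed $S^{\cL}$ is never verified. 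Finally, the corollary also asserts that every \emph{subalgebra} of $\cM$ acts; the paper records this as a separate proposition, and your write-up should add a sentence explaining why restricting $\ell^{\mathrm{tot}}_k$ to $\fh\oplus\cM$ for a subalgebra $\fh\subseteq\cM$ still closes and remains Hamiltonian (which is immediate from the explicit form of the mixed brackets, or from applying your $\Phi$-trick to $\fh\oplus\cM\subseteq\cM\oplus\cM$).
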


\subsection{Definition of Infinity Adjoint Action}

We will start with the case of an ordinary $L_\infty$-algebra $(\fg, \{\ell_i\})$.  Let us fix some notation: for $p_i = (X_i, u_i) \in \fg \oplus \fg$ a homogenous element, and $N \subsetneq \NN$ a finite subset, $A \subsetneq N$ a proper subset, denote
\[
p_i^A = \begin{cases} X_i & i \in A\\u_i & i \notin A. \end{cases}
\]

\begin{defn}
Let $(\fg, \{\ell_k\})$ be an $L_\infty$-algebra.  The \emph{infinity adjoint action} of $\fg$ on itself is given by the brackets 
\begin{align*}
\ell_k^\ltimes (p_1, \dotsc, p_k) :&=\Big( \ell_k (X_1 , \dotsc , X_k ),  \ell_k (X_1+u_1 , \dotsc , X_k +u_k)-\ell_k (u_1 , \dotsc , u_k )\Big)
\\&=\left ( \ell_k (X_1 , \dotsc , X_k ) , \sum_{A \subsetneq \{1, \dotsc , k\}} \ell_k \left (p_1^A, \dotsc p_k^A \right ) \right),    
\end{align*}

for $k \ge 1$, which sits in the short exact sequence of $L_\infty$-algebras
\[
0 \to \fg \xrightarrow{i} \fg \oplus \fg \xrightarrow{\pi} \fg \to 0
\]
where  $i$ and $\pi$ are the natural inclusion (into the second summand) and projection (onto the first factor).
\end{defn}

\begin{lem}
The brackets $\{\ell_k^\ltimes\}$ on $\fg \oplus \fg$ define an $L_\infty$-algebra. 
\end{lem}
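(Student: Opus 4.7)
The plan is to realize the brackets $\ell_k^\ltimes$ as arising from extension of scalars of $\fg$ along a small commutative $\RR$-algebra, thereby reducing the generalized Jacobi identities for $\{\ell_k^\ltimes\}$ to those already satisfied by $\{\ell_k\}$. Concretely, I would introduce $R := \RR[\tau]/(\tau^2-\tau)$, the commutative $\RR$-algebra on a single degree-zero idempotent generator $\tau$ (so $R \cong \RR \times \RR$ as a ring). The $R$-module $\fg \otimes_\RR R$ carries a natural $L_\infty$-structure $\{\widetilde{\ell}_k\}$ obtained by extending the original brackets $R$-multilinearly:
\[
\widetilde{\ell}_k(X_1 \otimes r_1, \dotsc, X_k \otimes r_k) := \ell_k(X_1, \dotsc, X_k) \otimes r_1 \cdots r_k;
\]
graded antisymmetry, the degree condition, and the generalized Jacobi relations for $\{\widetilde{\ell}_k\}$ are inherited from $\{\ell_k\}$ by a formal check. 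The key claim to prove is that the $\RR$-linear isomorphism
\[
\fg \oplus \fg \xrightarrow{\;\sim\;} \fg \otimes_\RR R, \qquad (X, u) \mapsto X \otimes 1 + u \otimes \tau,
\]
intertwines $\ell_k^\ltimes$ with $\widetilde{\ell}_k$, at which point the lemma follows immediately.

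To establish the key claim, I would write $p_i := X_i \otimes 1 + u_i \otimes \tau$ and expand multilinearly:
\[
\widetilde{\ell}_k(p_1, \dotsc, p_k) = \sum_{A \subseteq \{1,\dotsc,k\}} \ell_k(p_1^A, \dotsc, p_k^A) \otimes \tau^{\,k-|A|},
\]
where $A$ records the positions at which the $X$-summand of $p_i$ was chosen, so the remaining $k-|A|$ factors each contribute a $\tau$. Idempotency of $\tau$ gives $\tau^0 = 1$ and $\tau^m = \tau$ for $m \ge 1$, so the $A = \{1,\dotsc,k\}$ term contributes $\ell_k(X_1, \dotsc, X_k) \otimes 1$, while the remaining terms collect into $\bigl(\sum_{A \subsetneq \{1,\dotsc,k\}} \ell_k(p_1^A, \dotsc, p_k^A)\bigr) \otimes \tau$. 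Reading off the coefficients of $1$ and $\tau$ recovers exactly the two components of $\ell_k^\ltimes(p_1, \dotsc, p_k)$ in the definition.

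The only step that requires any care is precisely this bookkeeping, since everything else — graded antisymmetry, the degree $k-2$ condition, and the generalized Jacobi identities — is transported for free along the isomorphism. The choice of $\tau$ as \emph{idempotent} rather than, say, square-zero is exactly what ensures that all contributions with at least one $u$-entry collapse onto a single $\tau$-coefficient, matching the sum over proper subsets in the definition. As a sanity check, the short exact sequence of $R$-modules $0 \to \tau R \to R \to R/\tau R \to 0$ (with $\tau R \cong \RR$ and $R/\tau R \cong \RR$) corresponds under the tensor identification to the sequence $0 \to \fg \xrightarrow{i} \fg \oplus \fg \xrightarrow{\pi} \fg \to 0$ of the statement, so $i$ and $\pi$ are automatically strict maps of $L_\infty$-algebras, confirming the claimed extension structure.
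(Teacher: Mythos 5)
Your proof is correct, but it takes a genuinely different route from the paper, which verifies the generalized Jacobi identities for $\{\ell_k^\ltimes\}$ head-on: it splits the Jacobi sum into its two components, recognizes the first as a single Jacobi relation for $\fg$, and reorganizes the second (a double sum over subsets) into a sum over proper subsets $D \subset \{1,\dotsc,n\}$ of Jacobi relations for $\fg$. You instead exhibit $(\fg\oplus\fg,\{\ell_k^\ltimes\})$ as the base change $\fg\otimes_\RR R$ along the commutative algebra $R=\RR[\tau]/(\tau^2-\tau)$, so that all axioms are transported for free and the only content is the bookkeeping identity
\[
\widetilde{\ell}_k(p_1,\dotsc,p_k)=\ell_k(X_1,\dotsc,X_k)\otimes 1+\Bigl(\sum_{A\subset\{1,\dotsc,k\}}\ell_k(p_1^A,\dotsc,p_k^A)\Bigr)\otimes\tau ,
\]
which does match the definition because the paper's $A\subset\{1,\dotsc,k\}$ means \emph{proper} subset and idempotency collapses every term with at least one $u$-entry onto the $\tau$-coefficient. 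Two small points are worth making explicit: since $1$ and $\tau$ sit in degree $0$ and are central, no Koszul signs intervene in the multilinear expansion or in transporting the Jacobi relation, so the reduction is legitimate; and because $R\cong\RR\times\RR$, your argument shows the infinity adjoint is abstractly isomorphic to the \emph{product} $L_\infty$-algebra $\fg\times\fg$ via the degree-zero change of variables $(X,u)\mapsto(X+u,X)$ coming from the idempotent decomposition of $R$ --- a clean structural fact the paper's computation does not surface, and one that also makes the strictness of $i$ and $\pi$, the restriction to subalgebras $\fh\subseteq\fg$, and (since tensoring with a constant finite-dimensional algebra preserves poly-differential operators) the locality of the brackets all immediate. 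What the paper's direct verification buys in exchange is a self-contained computation whose subset-reindexing pattern is reused in the subsequent checks of compatibility with the pairing and of the Hamiltonian condition.
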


\begin{proof}
By construction, $i$ and $\pi$ are strict maps of $L_\infty$-algebras, each $\ell_k^\ltimes$ is multilinear, graded anti-symmetric and of degree $2-k$.  Hence, for the $L_\infty$-structure to be well defined, we need only check the generalized Jacobi identities. 
Let $n \ge 1$ and write $\ell_k(X,\sigma)=\ell_k (X_{\sigma(1)}, \dotsc ,
X_{\sigma(k)} )$ and similarly for $\ell_k(p^A,\sigma)$ and $\ell^{\ltimes}_k(p,\sigma)$. Then,
\begin{small}
\begin{align*}
&\sum_{k=1}^n (-1)^{k(n-k)}\sum_{\sigma \in \mathrm{Sh}(k,n-k)} \chi (\sigma) \ell^\ltimes_{n-k+1} \left ( \ell^\ltimes_k ( p, \sigma) , p_{\sigma (k+1)} , \dotsc , p_{\sigma(n)} \right )\\
&=\sum_{k=1}^n (-1)^{k(n-k)}\sum_{\sigma \in \mathrm{Sh}(k,n-k)} \chi (\sigma) \ell^\ltimes_{n-k+1} \left( \left( \ell_k (X,\sigma),  \sum_{A \subsetneq \{ \sigma(1) , \dotsc , \sigma (k)\}} \ell_k ( p^A,\sigma )\right),
p_{\sigma(k+1)}, \dotsc p_{\sigma (n)}  \right)\\
&= (X,U),  
\end{align*}
\end{small}
where
\[
X = \sum_{k=1}^n (-1)^{k(n-k)}\sum_{\sigma \in \mathrm{Sh}(k,n-k)} \chi (\sigma) \ell_{n-k+1} \left ( \ell_k (X,\sigma), X_{\sigma (k+1)} , \dotsc , X_{\sigma (n)} \right ),
\]
and
\begin{multline*}
U= \sum_{k=1}^n(-1)^{k(n-k)} \sum_{\sigma \in \mathrm{Sh}(k,n-k)} \chi (\sigma) \left [  \sum_{C \subsetneq \{ \sigma (k+1) , \dotsc, \sigma (n) \}} \ell_{n-k+1} \left ( \ell_k (X,\sigma), p^C_{\sigma (k+1)} , \dotsc , p^C_{\sigma (n)} \right ) \right.\\
+ \left. \sum_{B \subseteq \{ \sigma (k+1) , \dotsc , \sigma (n)\}} \ell_{n-k+1} \left ( \sum_{A \subsetneq \{ \sigma (1) , \dotsc , \sigma (k)\}} \ell_k \left ( p^A,\sigma) \right ), p_{\sigma (k+1)}^B , \dotsc , p_{\sigma (n)}^B \right ) \right ].
\end{multline*}
Now, $X=0$ as it is precisely a Jacobi identity for our original $L_\infty$-algebra $\fg$.  Similarly,
\[
U = \sum_{D \subsetneq \{1, \dotsc, n\}} \sum_{k=1}^n(-1)^{k(n-k)} \sum_{\sigma \in \mathrm{Sh}(k,n-k)} \chi (\sigma) \ell_{n-k+1} \left ( \ell_k (p^D_{\sigma (1)} , \dotsc , p^D_{\sigma (k)} ), p^D_{\sigma (k+1)}, \dotsc , p^D_{\sigma (n)} \right ).
\]
So, $U=0$ too, as it is a sum of Jacobi relations for our original $L_\infty$-algebra $\fg$.
\end{proof}

From the definition of the infinity adjoint action, it follows that we can restrict the infinity adjoint action to an $L_\infty$-subalgebra. The proof is just repeated use of the verification in the case of an ordinary Lie algebra.

\begin{prop}
Let $(\fg, \{\ell_k\})$ be an $L_\infty$-algebra and $\fh \subseteq \fg$ a subalgebra.  The infinity adjoint action restricts to an $L_\infty$-action of $\fh$ on $\fg$.
\end{prop}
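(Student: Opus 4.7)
The strategy is to exhibit $\fh \oplus \fg$ as an $L_\infty$-subalgebra of $\fg \oplus \fg$ equipped with its infinity adjoint action, and then read off the desired short exact sequence. Concretely, for $p_i = (X_i, u_i) \in \fh \oplus \fg$ (with $X_i \in \fh$ and $u_i \in \fg$), I would check that the formula
\[
\ell_k^\ltimes(p_1, \dotsc, p_k) = \left(\ell_k(X_1, \dotsc, X_k),\, \sum_{A \subset \{1, \dotsc, k\}} \ell_k(p_1^A, \dotsc, p_k^A)\right)
\]
lands back in $\fh \oplus \fg$. The second component poses no issue since each $p_i^A \in \fg$; the first component lies in $\fh$ precisely because $\fh$ is a subalgebra, and this is the one and only place the subalgebra hypothesis enters.

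Once the restricted brackets are confirmed to be well-defined, the generalized Jacobi identities are automatic: they already hold on all of $\fg \oplus \fg$ by the preceding lemma, and $\fh \oplus \fg$ is a graded subspace closed under the $\ell_k^\ltimes$, so the identities restrict. This gives $(\fh \oplus \fg, \{\ell_k^\ltimes\})$ the structure of an $L_\infty$-algebra without any new computation.

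To upgrade this to an action of $\fh$ on $\fg$ in the sense of the earlier definition, I would verify that the inclusion $i \colon \fg \to \fh \oplus \fg$ into the second summand and the projection $\pi \colon \fh \oplus \fg \to \fh$ onto the first are strict $L_\infty$-maps. The map $\pi$ is strict by inspection of the formula for $\ell_k^\ltimes$, while $i$ is strict because evaluating $\ell_k^\ltimes$ on inputs with vanishing first component kills every summand of $\sum_A$ except $A = \emptyset$, which returns $\ell_k(u_1, \dotsc, u_k)$. This produces the required short exact sequence of $L_\infty$-algebras $0 \to \fg \to \fh \oplus \fg \to \fh \to 0$.

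I do not anticipate any substantive obstacle; the result is essentially a transcription of the proof for $\fh = \fg$, with the subalgebra hypothesis playing the minimal role of keeping the first component of the bracket inside $\fh$. The only real content is the observation that the infinity adjoint action is \emph{functorial} with respect to $L_\infty$-subalgebra inclusions in the first slot.
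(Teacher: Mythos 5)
Your proof is correct and follows exactly the route the paper intends: the paper disposes of this proposition with the remark that it is ``just repeated use of the verification'' from the self-action case, and your argument makes that precise by checking closure of $\fh\oplus\fg$ under the $\ell_k^\ltimes$ (with the subalgebra hypothesis entering only in the first component), inheriting the Jacobi identities by restriction, and verifying the short exact sequence $0\to\fg\to\fh\oplus\fg\to\fh\to 0$. No gaps.
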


The extension of the infinity adjoint action to local $L_\infty$-algebras is clear.

\begin{defn}
Let $(\cM, \{\ell_k\})$ be a local $L_\infty$-algebra.  The \emph{infinity adjoint action} of $\cM$ on itself is given by the short exact sequence of sheaves of $L_\infty$-algebras
\[
0 \to \cM \xrightarrow{i} \cM \oplus \cM \xrightarrow{\pi} \cM \to 0,
\]
where $i$ and $\pi$ are induced by the inclusion and projection maps of the underlying bundles, i.e., $i$ is inclusion into the second factor of the Whitney sum of bundles and $\pi$ is projection onto the first factor. The $L_\infty$-structure on $\cM \oplus \cM$ is given by the infinity adjoint structure previously defined for non-local $L_\infty$-algebras. Observe that $i$ and $\pi$ are strict $L_\infty$-algebra maps.
\end{defn}

\begin{rmk}
The computation in the above lemma can be viewed as arising from a skew product construction. Namely we start with the linear isomorphism 
\[\Phi:\mathcal{M}\oplus\mathcal{M}\to \mathcal{M}\oplus\mathcal{M}, \quad  \Phi(X,u)=(X,X+u).\] The $L_{\infty}$-algebra structure of $\mathcal{M}$ is encoded in the coderivation $\delta_{\mathcal{M}}$ on the coalgebra ${\rm Sym}(\mathcal{M}[1])$ such that $\delta^2_{\mathcal{M}}=0.$ Then, $\Phi$ induces the coalgebra isomorphism 
\[
{\rm Sym}(\Phi):{\rm Sym}((\mathcal{M}\oplus\mathcal{M})[1])\to{\rm Sym}((\mathcal{M}\oplus\mathcal{M})[1]).
\] On the other hand we have the direct product coderivation $\delta_{\mathcal{M}}\oplus \delta_{\mathcal{M}}$ on the coalgebra ${\rm Sym}((\mathcal{M}\oplus\mathcal{M})[1])$ such that $(\delta_{\mathcal{M}}\oplus \delta_{\mathcal{M}})^2=0$.
Now define a coderivation $\delta^{\ltimes}$ on the coalgebra ${\rm Sym}((\mathcal{M}\oplus\mathcal{M})[1])$ by 
\[
{\rm Sym}(\Phi)\circ\delta^{\ltimes}=(\delta_{\mathcal{M}}\oplus \delta_{\mathcal{M}})\circ{\rm Sym}(\Phi).
\] Clearly, $(\delta^{\ltimes})^2=0$ and this yields a semidirect product structure of $L_\infty$-algebras on  $\mathcal{M}\oplus\mathcal{M}$, thus giving an exact sequence of $L_\infty$-algebras as in the above definition. 
\end{rmk}

There is actually one thing to check for the infinity adjoint action of \emph{local} $L_\infty$-algebras to be well-defined: that the $L_\infty$-structure on $\cM \oplus \cM$ is \emph{local}.  That is, we need the brackets $\{\ell_k^\ltimes\}$ to be given by poly-differential operators.  This is a straightforward check, e.g., one could use Peetre's Theorem and properties of infinite jet bundles under Whitney sum. That is,  $\ell_k^\ltimes$  is built from the original polydifferential operators $\ell_k$ using bundle projections, inclusions, and finite sums. Since polydifferential operators are closed under these operations,  $\ell_k^\ltimes$
 is again polydifferential.

\begin{lem}
The infinity adjoint action of a local $L_\infty$-algebra on itself is local. Thus, the infinity adjoint action of local $L_\infty$-algebras is well-defined.
\end{lem}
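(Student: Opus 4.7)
The plan is to verify, for each $k \ge 1$, that the bracket $\ell^\ltimes_k$ on $\cM \oplus \cM$ is a poly-differential operator, given that each $\ell_k$ on $\cM$ is. The key observation is that the class of poly-differential operators is closed under two operations: finite direct sums, and pre-composition in each slot with differential operators (in particular, with order-zero bundle maps). The formula defining $\ell^\ltimes_k$ is manifestly built from $\ell_k$ by exactly these two operations.

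To carry this out, first rewrite the infinity adjoint bracket using the projections $\pi_1, \pi_2 \colon L \oplus L \to L$ of the Whitney sum onto its two summands, so that $\pi_1 p_i = X_i$ and $\pi_2 p_i = u_i$, and $p_i^A = \pi_1 p_i$ if $i \in A$ and $\pi_2 p_i$ otherwise. Then
\[
\ell^\ltimes_k (p_1, \dotsc, p_k) = \Bigl( \ell_k(\pi_1 p_1, \dotsc, \pi_1 p_k), \;\sum_{A \subseteq \{1,\dotsc,k\}} \ell_k(p_1^A, \dotsc, p_k^A) \Bigr).
\]
The projections $\pi_1$ and $\pi_2$ are smooth bundle maps, hence order-zero differential operators. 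Each summand $\ell_k(p_1^A, \dotsc, p_k^A)$ is thus the composition of the poly-differential operator $\ell_k$ with $\pi_1$ or $\pi_2$ in each of its $k$ input slots, which is again a poly-differential operator. Summing over the finite index set of subsets $A \subseteq \{1, \dotsc, k\}$ preserves the poly-differential property, and the first component (composition with $\pi_1$ throughout) is poly-differential for the same reason. Therefore $\ell^\ltimes_k$ takes values in poly-differential operators, as required.

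A clean way to package this invariantly, avoiding coordinates, is via the natural isomorphism of $D_M$-modules
\[
J(L \oplus L) \cong J(L) \oplus J(L),
\]
under which the space of poly-differential operators $(\cM \oplus \cM)^{\otimes k} \to \cM \oplus \cM$ is identified with the appropriate $\mathrm{Hom}$ space of jets, and the operators $\ell^\ltimes_k$ are built from $\ell_k$ together with the (obviously $D_M$-linear) projections. No serious obstacle arises: everything reduces to the stability of poly-differential operators under these elementary constructions. Combined with the previous lemma establishing the generalized Jacobi identities, this shows that the infinity adjoint action is indeed a local $L_\infty$-action in the sense of our definition.
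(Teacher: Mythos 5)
Your argument is correct and follows exactly the route the paper indicates (the paper gives only a one-sentence sketch invoking properties of jet bundles under Whitney sum, which your proposal fleshes out: $J(L\oplus L)\cong J(L)\oplus J(L)$, closure of poly-differential operators under pre-composition with the order-zero bundle maps $\pi_1,\pi_2$, and finite sums). One cosmetic note: the paper's definition sums over \emph{proper} subsets $A \subset \{1,\dotsc,k\}$ rather than all subsets, but this is immaterial to the locality argument since either index set is finite.
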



\subsection{Properties of the Infinity Adjoint}

We now prove various properties of the infinity adjoint action, thereby proving Theorem \ref{thm:main}.  That the equivariant action functional induced by the infinity adjoint action restricts to the underlying classical action functional when the background fields are set to zero follows directly from the construction of the action  since $S^{\cL}$ contains at least one background field, so part (c) of the theorem is proved.  We still need to check parts (a) and (b) regarding compatibility (Definition \ref{def:comp}) and that the action is Hamiltonian (that it defines a Maurer--Cartan element in $\mathrm{Ham}(\cM,\cM)$ from Definition \ref{def:Ham}).  

While Hamiltonian implies symplectic, so (b) implies (a), we find it illustrative to prove compatibility first on its own. The proof is immediate, with the only subtlety arising from keeping track of the two copies of the $L_\infty$-algebra.

\begin{prop}
Let $(\cM, \{\ell_k\},\langle-,-\rangle)$ be a cyclic $L_\infty$-algebra. The infinity adjoint action of $\cM$ on itself is compatible with the pairing $\langle -, - \rangle$.
\end{prop}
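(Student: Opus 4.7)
The plan is to reduce the compatibility condition of Definition \ref{def:comp} to the cyclic invariance axiom already built into $\cM$ itself. The key observation is that for mixed inputs --- the first $r$ arguments drawn from the ``acting'' copy of $\cM$ and the last $s$ arguments from the ``acted-on'' copy --- the infinity adjoint bracket collapses to the ordinary bracket $\ell_{r+s}$ of $\cM$.

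First I would unpack the induced map $\ell^\ltimes_{r+s} \colon \cM^{\otimes r} \otimes \cM^{\otimes s} \to \cM$ coming from the extension structure. Writing background inputs as $(X_i, 0)$ and foreground inputs as $(0, m_j)$ in $\cM \oplus \cM$, I compute the two components of $\ell^\ltimes_{r+s}((X_1,0), \ldots, (X_r,0), (0,m_1), \ldots, (0,m_s))$. The first component is $\ell_{r+s}(X_1, \ldots, X_r, 0, \ldots, 0)$, which vanishes because $s \ge 1$ and $\ell_{r+s}$ is multilinear. For the second component, of the sum over proper subsets $A$ of $\{1, \ldots, r+s\}$, only one term survives: requiring every $p_i^A$ to be nonzero forces $\{1, \ldots, r\} \subseteq A$ and $A \cap \{r+1, \ldots, r+s\} = \emptyset$, hence $A = \{1, \ldots, r\}$. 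The hypothesis $s \ge 1$ guarantees that this $A$ is indeed proper, so the term is retained, and the surviving summand is exactly $\ell_{r+s}(X_1, \ldots, X_r, m_1, \ldots, m_s)$.

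With the mixed bracket identified, the compatibility condition asks that
\[
\langle \ell^\ltimes_{r+s}(X_1, \ldots, X_r, m_1, \ldots, m_s), m_{s+1}\rangle = \langle \ell_{r+s}(X_1, \ldots, X_r, m_1, \ldots, m_s), m_{s+1}\rangle
\]
be graded anti-symmetric in all $r+s+1$ arguments, which is precisely the invariance axiom for the cyclic $L_\infty$-algebra $(\cM, \{\ell_k\}, \langle -,-\rangle)$ and holds by hypothesis. The only real work is the combinatorial bookkeeping that collapses the sum over $A$ to a single term; once this is done, compatibility is immediate, and there is no substantive obstacle to surmount. A brief sanity check at the edge cases $r=0$ (where $A = \emptyset$ is proper because $s \ge 1$, giving back invariance of $\langle -,-\rangle$ for the internal brackets of $\cM$) and $r = 1$, $s = 1$ (yielding the standard invariance identity $\langle \ell_2(X,m_1), m_2\rangle = \pm \langle \ell_2(X,m_2), m_1\rangle$ shifted appropriately) confirms the reduction is complete.
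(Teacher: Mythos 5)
Your proof is correct and follows essentially the same route as the paper's: identify the mixed bracket $\ell^\ltimes_{r+s}$ on $\cL^{\otimes r}\otimes\cM^{\otimes s}$ with the original bracket $\ell_{r+s}$ of $\cM$ and then invoke the invariance of $\langle-,-\rangle$. The only difference is that you spell out the combinatorial collapse of the sum over proper subsets $A$ to the single term $A=\{1,\dotsc,r\}$, which the paper simply asserts via the ``concatenation'' description; this is a welcome, if minor, elaboration.
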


\begin{proof}
This result is not dependent on the local structure, so it is enough to prove the claim for an ordinary $L_\infty$-algebra equipped with an invariant pairing. Let $(\fg, \langle -,- \rangle)$ be such an $L_\infty$-algebra and let $\{\ell^\ltimes_k\}$ denote the brackets for the infinity adjoint action of $\fg$ on itself. We need that for all $r \ge 0$, $s \ge 1$, and tuples $\{X_1 ,\dotsc , X_r\}$ from $\fg$ (the copy doing the acting) and $\{m_1, \dotsc , m_{s+1}\}$ from $\fg$ (the copy being acted upon)
\[
\langle \ell^\ltimes_{r+s} (X_1 , \dotsc , X_r , m_1 , \dotsc , m_s),m_{s+1} \rangle
\]
is graded anti-symmetric in all arguments. But this is clear as the bracket $\ell^\ltimes_{r+s}$ is application of the original bracket $\ell_{r+s}$ from $\fg$ after the ``concatenation" map $\fg^{\otimes r} \otimes \fg^{\otimes s} \to \fg^{\otimes (r+s)}$, and the $L_\infty$-algebra structure on $\fg$ itself is compatible with the pairing by hypothesis.
\end{proof}

\begin{prop}
Let $(\cM, \{\ell_k\},\langle-,-\rangle)$ be a cyclic $L_\infty$-algebra. The infinity adjoint action of $\cM$ on itself defines a Maurer--Cartan element in $\mathrm{Ham}(\cM,\cM)$, i.e., the infinity adjoint action is Hamiltonian.
\end{prop}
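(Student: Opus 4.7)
The plan is to exhibit an explicit Hamiltonian representative and reduce the Maurer--Cartan condition to the generalized Jacobi identities for $\ell_k^\ltimes$ already verified above. Unwinding the second-coordinate $A$-sum in $\ell_k^\ltimes((X,u)^{\otimes k})$ and grouping by $r := \lvert A \rvert$, I propose the candidate
\[
S^\cL(X, u) := \sum_{k \ge 2}\,\sum_{r=1}^{k-1}\, \frac{\binom{k}{r}}{(k+1)!}\, \bigl\langle u,\, \ell_k(X^{\otimes r}, u^{\otimes (k-r)})\bigr\rangle.
\]
Every summand carries at least one factor of $X$, so $S^\cL$ lies in $\ker\rho = \mathrm{Ham}(\cM,\cM)$ automatically, settling the membership half of the claim.

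The master equation is the substantive part. My strategy is to package $S^\cL$ together with the pullback $\pi_2^\ast S^\cM$ of the original classical action along the second projection $\pi_2 \colon \cM \oplus \cM \to \cM$, combining them into the single cyclic-type functional
\[
S^{\mathrm{tot}}(X,u) \;=\; \sum_{k \ge 1}\, \tfrac{1}{(k+1)!}\, \bigl\langle (X,u),\, \ell_k^\ltimes((X,u)^{\otimes k}) \bigr\rangle_{\mathrm{pull}},
\]
where $\langle (X_1,u_1),(X_2,u_2)\rangle_{\mathrm{pull}} := \langle u_1,u_2\rangle$ is the pullback of the $\cM$-pairing. A combinatorial regrouping by $r = \lvert A \rvert$ confirms $S^{\mathrm{tot}} = \pi_2^\ast S^\cM + S^\cL$ term by term, with the $r=0$ contributions recovering $S^\cM(u)$ and the remaining ones recovering $S^\cL(X,u)$. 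The derivation in Section~\ref{sect:LCFT} that the classical action of a cyclic $L_\infty$-algebra satisfies the master equation uses only invariance of the pairing and the generalized Jacobi identities for the brackets, both of which hold here --- the former by the preceding compatibility proposition applied to $\langle-,-\rangle_{\mathrm{pull}}$, and the latter by the preceding lemma --- so $S^{\mathrm{tot}}$ satisfies the classical master equation in $C^\ast_{\mathrm{red,loc}}(\cM \oplus \cM)[-1]$ equipped with the bracket induced by $\langle-,-\rangle_{\mathrm{pull}}$.

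Subtracting the master equation for $\pi_2^\ast S^\cM$, which holds because $S^\cM$ is Maurer--Cartan on $C^\ast_{\mathrm{red,loc}}(\cM)[-1]$ and the second inclusion is a strict $L_\infty$-morphism preserving the pairing data, leaves
\[
d\,S^\cL + \{\pi_2^\ast S^\cM,\, S^\cL\} + \tfrac{1}{2}\{S^\cL, S^\cL\} = 0.
\]
The operator $d + \{\pi_2^\ast S^\cM,-\}$ is precisely the Chevalley--Eilenberg differential on local cochains reflecting the second-factor $L_\infty$ structure, so the displayed identity is the Maurer--Cartan equation defining an element of $\mathrm{Ham}(\cM,\cM)$. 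I expect the principal obstacle to be careful bookkeeping around the degenerate pullback pairing: one must verify that the Lie bracket implicit in Definition~\ref{def:Ham} agrees with the one induced by $\langle-,-\rangle_{\mathrm{pull}}$ when restricted to $\ker\rho$, and that the combinatorial regrouping producing the decomposition $S^{\mathrm{tot}} = \pi_2^\ast S^\cM + S^\cL$ respects the Koszul signs hidden in the graded anti-symmetric inputs of each $\ell_k$. Once these identifications are in place, the Hamiltonian property of the infinity adjoint action follows.
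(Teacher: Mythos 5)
There is a genuine gap at the master-equation step, and it is exactly the step the paper's own proof is designed to sidestep. Your reduction rests on the claim that $\langle-,-\rangle_{\mathrm{pull}}$ is an invariant pairing for the brackets $\ell_k^\ltimes$ on $\cM\oplus\cM$, ``by the preceding compatibility proposition.'' But compatibility (Definition \ref{def:comp}) asserts graded anti-symmetry of $\langle\ell^\ltimes_{r+s}(X_1,\dotsc,X_r,m_1,\dotsc,m_s),m_{s+1}\rangle$ treating the $\cL$-slots and the $\cM$-slots as separate variables; it does not assert that
\[
(p_1,\dotsc,p_{k+1})\;\longmapsto\;\bigl\langle\ell_k^\ltimes(p_1,\dotsc,p_k),\,p_{k+1}\bigr\rangle_{\mathrm{pull}},\qquad p_i\in\cM\oplus\cM,
\]
is graded anti-symmetric in all $k+1$ arguments, and in fact it is not: the outer slot only ever contributes its second ($u$-) component, while the inner slots contribute both components through the sum over subsets $A$, so the transposition exchanging $p_j$ with $p_{k+1}$ produces terms of the shape $\langle\ell_k(\dotsc,X_{k+1}),u_j\rangle$ that simply do not occur in the original sum. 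Without this full invariance the cyclic-$L_\infty$ derivation of the master equation does not transfer; concretely, it is what one needs to identify $\delta S^{\mathrm{tot}}/\delta u$ with $\sum_k\frac{1}{k!}(\ell_k^\ltimes)_2$, and your coefficients $\binom{k}{r}/(k+1)!$ do not produce that (the $u$-derivative of the $(k,r)$-term carries a factor $(k-r+1)$, which is uniform only when $r=0$). There is also a structural problem: the Maurer--Cartan equation in $\mathrm{Ham}(\cM,\cM)$ involves the differential $d_\cL+d_\cM$, where $d_\cL$ encodes the first-factor brackets $\ell_k(X_1,\dotsc,X_k)$; since $\langle-,-\rangle_{\mathrm{pull}}$ is blind to the first factor, no ``cyclic master equation'' with respect to that pairing can generate the $d_\cL S^\cL$ term, yet that term is needed to cancel the contributions of the Jacobiator of $\ell^\ltimes$ in which the output of one bracket is fed into an $X$-slot of another.

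The paper's proof avoids all of this. It starts from the compatible action already established (a Maurer--Cartan element of $\mathrm{Act}(\cM,\cM)$, via the compatibility proposition and the Jacobi lemma), takes the naive lift to $\ker\rho$ coming from $C^\ast_{\text{red}}(\cM)\hookrightarrow C^\ast(\cM)$, and observes that the failure of the lift to satisfy the Maurer--Cartan equation is a functional of the background fields alone, i.e., an element of $\sO_{\text{loc}}(\cL[1])$, which vanishes for type reasons for non-curved actions (Lemma 12.2.3.4 of \cite{CG2}). Your explicit formula for $S^\cL$ and the decomposition $S^{\mathrm{tot}}=\pi_2^\ast S^\cM+S^\cL$ are a useful complement, but to close the argument you should either invoke that obstruction-theoretic step, or verify $d_\cL S^\cL+d_\cM S^\cL+\frac{1}{2}\{S^\cL,S^\cL\}=0$ directly against the Jacobi identities for $\ell_k^\ltimes$, keeping the $\cL$- and $\cM$-variables (and their separate symmetry factors) distinct rather than packaging them through a degenerate pairing.
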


\begin{proof}
We will restrict to the case that the pairing has degree $-3$; the result follows for other degrees by carefully tracking various degree shifts. The proof will follow from some general considerations, see Lemma 12.2.3.4 of \cite{CG2}.  To begin, for any compatible action of $\cL$ on $\cM$, there is a short exact sequence of dg Lie algebras
\[
0 \to C^\ast_{\text{red,loc}} (\cL)[-1] \to \mathrm{Ham}(\cL,\cM) \to \mathrm{Act}(\cL,\cM) \to 0.
\]
Given a Maurer--Cartan element (so a compatible action) $S^\cL \in \mathrm{Act}(\cL, \cM)$ there is a naive lift $\widetilde{S^\cL} \in \mathrm{Ham} (\cL,\cM)$ coming from the fact that $C^\ast_\text{red} (\cM) \hookrightarrow C^\ast (\cM)$. The obstruction to $\widetilde{S^\cL}$ being a Maurer--Cartan element in $\mathrm{Ham}(\cL,\cM)$ can be viewed as an element
\[
\left. \left ( d_\cL \widetilde{S^\cL} + d_\cM \widetilde{S^\cL} + \frac{1}{2} \left \{ \widetilde{S^\cL} , \widetilde{S^\cL} \right \} \right ) \right |_{\sO_\text{loc}(\cL[1])} \in \sO_\text{loc} (\cL[1]).
\]
Such a functional always vanishes (for type reasons, see Lemma  12.2.3.4 of \cite{CG2}) for non-curved $L_\infty$-actions.  In the present setting, we are only considering $L_\infty$-algebras and actions without curving. So, in particular, the naive lift $\widetilde{S^\cL}$ of the infinity adjoint action is indeed a Maurer--Cartan element in $\mathrm{Ham}(\cM,\cM)$.
\end{proof}


\begin{rmk}
Curved $L_\infty$-algebras and actions show up often in field theory when one tries to model $\sigma$-models via $L_\infty$-spaces, see for instance \cite{GG1}.   Obstructions to Hamiltonian actions are also common when performing renormalization and/or quantization procedures, see Section 14.1.2 \cite{CG2} for a simple example.
\end{rmk}

\section{Conserved Currents and Charges from Local $L_\infty$-actions}\label{sect:currents}

We now pivot and recall the Noether Theorem for local $L_\infty$-actions as articulated in \cite{CG2}. See also Sections 4 and 5 of \cite{CG3} and Section 3 of \cite{Gw}. More specifically, we will outline the following result, which is really an amalgamation of the work done in previous sections and Section 12.5 of \cite{CG2}. 

\begin{prop}[Classical Noether Theorem of \cite{CG2}]\label{prop:noether}
Let $(\cM, \{\ell_k\},\langle-,-\rangle)$ be a local cyclic (of degree $-3$) $L_\infty$-algebra  over $M$ and $\fg$ an $L_\infty$-algebra (over a point). Furthermore, let $\fg$ act on $\cM$ via a Hamiltonian action.  Then, for each  degree-zero infinitesimal symmetry $X \in \fg$, there is an associated conserved current $J[X]$ of the classical field theory presented by $\cM$.
\end{prop}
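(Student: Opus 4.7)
The plan is to extract the current from the Hamiltonian Maurer--Cartan element by expanding in the $\fg$-polynomial direction and isolating the component linear in $\fg^\vee$; conservation will then follow by decomposing the Maurer--Cartan equation by polynomial degree.

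First, I would apply Proposition \ref{prop:Act}, together with the Hamiltonian hypothesis, to promote the action to a Maurer--Cartan element
\[
\widetilde{S}^\fg \in \mathrm{Ham}(\fg,\cM) \subset C^\ast_{\text{red,loc}}(\fg \oplus \cM)[-1].
\]
Because $\fg$ sits over a point while $\cM$ is local on $M$, the relevant cochain complex splits as a completion of $\widehat{\Sym}(\fg^\vee[-1])\otimes C^\ast_{\text{red,loc}}(\cM)[-1]$, and the kernel condition defining $\mathrm{Ham}(\fg,\cM)$ guarantees that the $\fg$-polynomial degree zero component vanishes. I would therefore write
\[
\widetilde{S}^\fg = \sum_{k \ge 1} \widetilde{S}^\fg_{(k)}, \qquad \widetilde{S}^\fg_{(k)} \in \Sym^k(\fg^\vee[-1]) \otimes \sO_{\text{loc}}(\cM[1])[-1],
\]
and extract the linear piece $\widetilde{S}^\fg_{(1)} \colon \fg \to \sO_{\text{loc}}(\cM[1])[-1]$. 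For each $X \in \fg$, define the candidate current as
\[
J[X] \;:=\; \widetilde{S}^\fg_{(1)}(X) \;\in\; \sO_{\text{loc}}(\cM[1])[-1].
\]

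Second, to identify $J[X]$ as an honest current on $M$ I would invoke the quasi-isomorphism of Lemma 3.5.4 (stated in the paper), $C^\ast_{\text{red,loc}}(\cM) \simeq \Omega^\ast(M, C^\ast_{\text{red}}(J\cM))[\dim M]$. Under this identification, $J[X]$ is realized as a top-form density on $M$ with values in Chevalley--Eilenberg cochains on the jet $L_\infty$-algebra, i.e. a Noether density. To obtain the chain-map formulation $J[X] \colon \Omega^\ast_M \to \sO_{\text{loc}}(\cM)[-1]$ that is used later in the paper, one wedges arbitrary de Rham forms with the underlying primitive; the key output at this stage, however, is simply a local functional.

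Third, I would verify conservation by expanding the Maurer--Cartan equation
\[
d_\fg \widetilde{S}^\fg + d_\cM \widetilde{S}^\fg + \tfrac{1}{2}\{\widetilde{S}^\fg,\widetilde{S}^\fg\} = 0
\]
by $\fg$-polynomial degree, where $\{-,-\}$ is the bracket on local functionals induced by the cyclic pairing on $\cM$. The component of polynomial degree one, evaluated on $X$, reads (schematically)
\[
d_\cM J[X] + \{S_\cM, J[X]\} = \text{contribution from } d_\fg \widetilde{S}^\fg_{(2)}(X),
\]
where the right-hand side encodes the $L_\infty$-brackets on $\fg$ itself and vanishes when $X$ is a strict symmetry (or, in cohomology, modulo the image of such brackets). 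Combined with the de Rham description above, this is precisely the statement that $J[X]$ is closed in the complex of local observables modulo equations of motion, i.e.\ conserved on-shell, so that its integral over a codimension-one cycle defines a charge $\mathcal{Q}[X]$ that is independent of the cycle up to exact terms.

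The main obstacle is bookkeeping: carefully tracking the three degree shifts (the $[-1]$ in the definition of $\mathrm{Ham}$, the $[\dim M]$ in Lemma 3.5.4, and the $[1]$ between $\cM$ and its space of fields), together with the compatibility between the Chevalley--Eilenberg differential on $C^\ast(\fg)$, the internal differential $\ell_1$ on $\cM$, and the local-functional bracket. One must also check that extracting the linear-in-$\fg$ part respects the kernel definition of $\mathrm{Ham}(\fg,\cM)$, so that $J[X]$ actually lives in $\sO_{\text{loc}}(\cM[1])$ and not merely in a quotient where the would-be current could be trivialized.
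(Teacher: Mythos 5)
Your overall strategy---decompose the Hamiltonian Maurer--Cartan element by polynomial degree in the symmetry algebra, take the linear piece as the current, and read off conservation from the degree-one component of the Maurer--Cartan equation---is the right spirit and is essentially how conservation is established in Section 12.5 of \cite{CG2}. However, there is a genuine gap at the very first step: you work directly with $\mathrm{Ham}(\fg,\cM)$, but $\fg$ is an $L_\infty$-algebra over a point, whereas the Hamiltonian action complex is only defined in this framework for a \emph{local} $L_\infty$-algebra over the same manifold $M$ as $\cM$. The missing ingredient is Lemma \ref{lem:ext} (Lemma 12.2.4.2 of \cite{CG2}): one first replaces the action of $\fg$ by the (essentially unique, up to homotopy) Hamiltonian action of the local $L_\infty$-algebra $\cL := \Omega^\ast_M \otimes \fg$. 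This is not mere bookkeeping. The current in the statement is a map $J[X]\colon \Omega^\ast_M \to \sO_{\text{loc}}(\cM)[-1]$, and that $\Omega^\ast_M$-dependence comes precisely from the background field living in $\Omega^\ast_M \otimes \RR_X$; it is obtained as the variational derivative $\delta S^{\text{tot}}/\delta X$ of the equivariant action with respect to the background field, restricted to $\Omega^\ast_M \otimes \RR_X \cong \Omega^\ast_M$. Your proposal instead produces a single local functional $\widetilde{S}^\fg_{(1)}(X)$ and then tries to recover the form-dependence by ``wedging arbitrary de Rham forms with the underlying primitive,'' which is not justified and is not how the structure arises: without the de Rham factor in the symmetry algebra there is no de Rham differential in the Maurer--Cartan equation, and conservation---which is the statement that $J[X]$ intertwines $d_{dR}$ with an exact/on-shell-trivial term, so that $J[X](\eta_N)$ depends on $N$ only through its homology class---cannot even be formulated.

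A secondary issue: in your conservation computation you write $d_\cM J[X] + \{S_\cM, J[X]\}$ as two separate terms, but in the conventions of the paper $d_\cM = \{S,-\}$ where $S$ encodes the full $L_\infty$-structure of $\cM$, so these are the same term counted twice; the honest degree-one equation in the extended setting reads schematically $d_{dR} J[X] + \{S, J[X]\} + (\text{terms from the }\fg\text{-brackets}) = 0$, and it is the $d_{dR}$ term that expresses conservation. Once you insert the extension to $\Omega^\ast_M \otimes \fg$ at the start, the rest of your outline (polynomial-degree decomposition, the kernel condition killing degree zero, citing Section 12.5 of \cite{CG2} for the identification with a traditionally conserved current) lines up with the construction given in the paper.
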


Traditionally, if our classical theory lives over the manifold $M$ of dimension $d$, then the conserved current $J[X]$ is  a Lagrangian-valued $(d-1)$-form. Its integral over a (closed) codimension-one  submanifold $N \subset M$ is the \emph{charge} associated to the symmetry $X$. These quantities (and the statement of Noether's Theorem) are expressed via the \emph{variational bicomplex}. We will instead outline how to recover the same data in the setting of local $L_\infty$-algebras.

To begin, we need to resolve/extend the action of the (ordinary) $L_\infty$-algebra $\fg$ to that of a local $L_\infty$-algebra.  The following shows that there is essentially a unique (up to homotopy) way to do this.

\begin{lem}[Lemma 12.2.4.2 of \cite{CG2}]\label{lem:ext}
Let $(\cM, \{\ell_k\},\langle-,-\rangle)$ be a local cyclic (of degree $-3$) $L_\infty$-algebra  over $M$ and let $\fg$ be an $L_\infty$-algebra.  There is a canonical homotopy equivalence (of simplicial sets) describing 
\begin{itemize}
\item[(a)] Actions of $\fg$ on the classical field theory presented by $\cM$, and 
\item[(b)] Hamiltonian actions of the local $L_\infty$-algebra $\Omega^\ast_M \otimes \fg$ on $\cM$.
\end{itemize}
\end{lem}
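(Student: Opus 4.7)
The plan is to recognize each side of the claimed equivalence as the Maurer--Cartan simplicial set of an appropriate dg Lie algebra, exhibit a natural dgla map between them induced by the inclusion of constant forms $\fg \hookrightarrow \Omega^\ast_M \otimes \fg$, and then upgrade a filtered quasi-isomorphism of those dglas to a homotopy equivalence of Maurer--Cartan simplicial sets via a Hinich-type homotopy invariance theorem.

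First, I would make both sides precise. Side (b) is, by construction, the Maurer--Cartan simplicial set of the dgla $\mathrm{Ham}(\Omega^\ast_M \otimes \fg, \cM)$ from Definition \ref{def:Ham}. For side (a), I would interpret an \emph{action of $\fg$ on the classical field theory presented by $\cM$} via the Koszul duality discussion preceding Proposition \ref{prop:Act}: namely, as an $L_\infty$-morphism $\fg \rightsquigarrow \sO_\text{loc}(\cM[1])[-1]$, or equivalently a Maurer--Cartan element of the dgla $A_\fg := C^\ast_\text{red}(\fg) \otimes \sO_\text{loc}(\cM[1])$ (with the completed projective tensor product), whose dgla structure combines the Chevalley--Eilenberg differential on $\fg$ with the bracket on local functionals induced by the cyclic pairing on $\cM$.

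Second, I would construct the comparison map. The inclusion $\fg \hookrightarrow \Omega^\ast_M \otimes \fg$ of constants is a strict morphism of (sheaves of) $L_\infty$-algebras over $M$. Passing to reduced local Chevalley--Eilenberg cochains and combining with the evident tensor structure on the $\cM$-side produces a morphism of dglas $\Phi \colon \mathrm{Ham}(\Omega^\ast_M \otimes \fg, \cM) \to A_\fg$. The most substantive step is to show $\Phi$ is a quasi-isomorphism. Here the engine is the Poincar\'e lemma: the inclusion $\underline{\RR} \hookrightarrow \Omega^\ast_M$ is a quasi-isomorphism of $D_M$-modules, and tensoring with the finite-dimensional $\fg$ and passing to jets preserves this. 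To promote this coefficient-level statement to the full dglas, I would filter both $\mathrm{Ham}(\Omega^\ast_M \otimes \fg, \cM)$ and $A_\fg$ by polynomial degree in $\cM^\vee$, obtaining complete, exhaustive, pro-nilpotent filtrations, and then compare the associated spectral sequences; the $E_1$-pages agree by the de Rham quasi-isomorphism applied to each symmetric power, and convergence on both sides yields the quasi-isomorphism. A Hinich--Getzler style theorem for filtered quasi-isomorphisms of complete dglas then produces the desired homotopy equivalence on Maurer--Cartan simplicial sets.

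The main obstacle I anticipate is twofold. Analytically, one must verify that the completed projective tensor products and the $D_M$-module completed symmetric algebras appearing in $C^\ast_\text{red,loc}$ are compatible with the spectral-sequence comparison, which requires some care with the appropriate topologies on local cochain complexes. Conceptually, one must reconcile \emph{actions} on side (a) with \emph{Hamiltonian actions} on side (b); I expect this to follow from the same obstruction analysis used in the proof of Theorem \ref{thm:main}, namely that the obstruction to lifting a compatible action to a Hamiltonian one lives in a summand of $\sO_\text{loc}(\cL[1])$ which vanishes for type reasons in the absence of curving, so that under our standing uncurved hypotheses the two notions agree up to contractible choice.
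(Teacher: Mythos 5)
The paper does not prove this statement---it is imported verbatim as Lemma 12.2.4.2 of \cite{CG2}---so there is no internal proof to compare against; your sketch is, however, essentially the argument given in the cited source: identify both sides as Maurer--Cartan simplicial sets of dg Lie algebras, use the jet-level (formal) Poincar\'e lemma $C^\infty_M \simeq J(\Omega^\ast_M)$ to show the map induced by $\fg \hookrightarrow \Omega^\ast_M \otimes \fg$ is a filtered quasi-isomorphism, and conclude by homotopy invariance of Maurer--Cartan spaces for complete filtered dglas. Your reading of side (a) as a Maurer--Cartan element of $C^\ast_\text{red}(\fg)\otimes \sO_\text{loc}(\cM[1])$ is also the correct resolution of the ``action vs.\ Hamiltonian action'' mismatch, since an action on the \emph{field theory} (as opposed to on $\cM$) already carries the action functional and is therefore the Hamiltonian notion.
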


So if $\fg$ acts via infinitesimal symmetries, then we have a Hamiltonian action of $\cL := \Omega^\ast_M \otimes \fg$ and correspondingly we obtain an equivariant action functional $S^\text{tot}$ which is a function of fields on $\cM$ and our background fields $\cL$.  We can use this equivariant action to extract a \emph{current map}
\[
J \colon \cL \to \sO_{\text{loc}}(\cM)[-1], \quad X \mapsto \frac{\delta S^\text{tot}}{\delta X}.
\]
(Passing to compactly supported sections of $\cL$, we obtain an observable $J[X] \in \mathrm{Obs}^\text{cl}$ via integration.)

Now let $X \in \fg$ and $\RR_X := \RR\langle X \rangle$ the Abelian Lie algebra of symmetries it generates. We then obtain a map 
\[
J[X] \colon \Omega^\ast_M \cong \Omega^\ast_M \otimes_\RR \RR_X \to \sO_\text{loc}(\cM)[-1].
\]
Let $N \subset M$ be a compact oriented codimension-one submanifold and $\eta_N \in \Omega^1_M$ its Poincar\'{e} dual.  The local functional $J[X] (\eta_N)$ can be identified with the  current evaluated on $N$ in the traditional sense, see Section 12.5 of \cite{CG2}, where it is shown that this current is also \emph{conserved}.

\begin{defn}
Let $X \in \fg$ be an infinitesimal symmetry of the classical field theory presented by the local cyclic $L_\infty$-algebra $(\cM, \{\ell_k\},\langle-,-\rangle)$. 
\begin{itemize}
\item[(a)] The \emph{current associated to $X$} is the map
\[
J[X] \colon \Omega^\ast_M \cong \Omega^\ast_M \otimes_\RR \RR_X \to \sO_\text{loc}(\cM)[-1].
\]
\item[(b)] If $N \subset M$ is a compact oriented submanifold, then the \emph{associated charge} is 
\[
\mathcal{Q}[X;N] := \int_M J[X](\eta_N),
\]
where $\eta_N$ is the Poincar\'{e} dual form to the submanifold $N$.
\end{itemize}
\end{defn}

\begin{rmk}
In \cite{CG2}, Proposition \ref{prop:noether} actually arises from several richer and more elegant mathematical constructions.  Most notably, we have chosen not to talk about the full structure present on the algebra of classical observables.  Had we recalled this structure in detail, we could have then stated Theorem 12.4.1 of \cite{CG2} which proves that the action of  $\cL$ actually determines a map of presheaves of  $L_\infty$-algebras $\cL \rightsquigarrow \mathrm{Obs}^\text{cl}$. Theorem 12.4.1.2 articulates how this map is compatible with (shifted) Poisson structures. Moreover, Chapter 13 of \cite{CG2} is an extension of these results to the quantum realm (given a quantization of the classical field theory).
\end{rmk}

\section{The \'{C}iri\'{c}--Giotopoulos--Radovanovi\'{c}--Szabo $L_\infty$-algebra}\label{sect:CGRS}

The Einstein--Cartan--Palatini (ECP) field theory is an approach to classical gravity.  In three and four dimensions, it is equivalent to the Einstein--Hilbert field theory, see \cite{CS3}, \cite{CS2}.  Cattaneo and Schiavina gave a BV extension of ECP theory in \cite{CS1} (actually they did more and provided a BV--BFV extension which allows for spacetimes with boundary). \'{C}iri\'{c}, Giotopoulos, Radovanovi\'{c}, and Szabo \cite{CGRS} then translated ECP into the language of $L_\infty$-algebras in dimensions three and four.  In this section, we recall the CGRS $L_\infty$-algebra $\cM_\text{ECP}$, focusing on the four-dimensional case with cosmological constant $\Lambda$ (Section 8 of \cite{CGRS}).  We also recall that it defines a cyclic local $L_\infty$-algebra as we have defined above; the salient details are already in \cite{CGRS}, we simply put them together carefully.

Throughout we will make repeated use of the standard isomorphism
\[
\Phi \colon \mathfrak{so}(1,3) \xrightarrow{\; \simeq \;} \Lambda^2 \RR^{1,3}, \quad \Phi (A)_{\mu \nu} = \eta(u_\mu, A u_\nu),
\]where $A \in \mathfrak{so}(1,3)$ is a matrix with  respect to the standard basis, $\{u_\nu\}$, of Minkowski space $\RR^{1,3}$ and $\eta$ is the Minkowski metric. Our four-dimensional spacetime, $M$,  will be a subset of $\RR^{4}$ with Lorentzian metric and we (occasionally) use global coordinates.

As a graded vector space the $L_\infty$-algebra is concentrated in four degrees 
\[
\cM_{\text{ECP}} := \cM^0 \oplus \cM^1 \oplus \cM^2 \oplus \cM^3, 
\]
with
\begin{align*}
&\cM^0 := \mathrm{Vect}(M) \oplus \Omega^0 (M, \mathfrak{so}(1,3)),\\
&\cM^1 := \Omega^1(M, \RR^{1,3}) \oplus \Omega^1 (M, \mathfrak{so}(1,3)),\\
&\cM^2 := \Omega^3(M, \Lambda^3 \RR^{1,3}) \oplus \Omega^3 (M, \Lambda^2 \RR^{1,3}),\\
&\cM^3:= \Omega^1(M, \mathrm{Dens}_M) \oplus \Omega^4(M,\Lambda^2 \RR^{1,3}).
\end{align*}
There are only three non-zero brackets: $\ell_1, \ell_2, \ell_3$. These brackets can be read off by decomposing the BV action of \cite{CS1} by polynomial degree in the fields. (See also \cite{HohmZwiebach17} for a general strategy.)

The differential, $\ell_1$, is nonzero on vectors of degree zero and two. For $(\xi, \rho) \in \cM^0$, $(E,\Omega) \in \cM^2$, 
\[
\ell_1 (\xi, \rho) = (0, d \rho) \quad \text{ and } \quad \ell_1 (E,\Omega) = (0,-d \Omega).
\]

Adding $(e,\omega) \in \cM^1$ and $(\mathcal{X},\mathcal{P}) \in \cM^3$ to the notation from the preceding paragraph, the binary bracket, $\ell_2$, is given by
\begin{align*}
&\ell_2 ((\xi_1,\rho_1),(\xi_2,\rho_2)) = ([\xi_1,\xi_2], -[\rho_1,\rho_2] + \xi_1 \cdot \rho_2 - \xi_2 \cdot \rho_1),\\
&\ell_2 ((\xi,\rho),(e,\omega))=(-\rho \cdot e+L_\xi e, -[\rho,\omega]+L_\xi \omega),\\
&\ell_2 ((\xi,\rho),(E,\Omega))= (- \rho \cdot E+L_\xi E, -[\rho,\Omega] + L_\xi \Omega),\\
&\ell_2 ((\xi,\rho),(\mathcal{X},\mathcal{P}))=(dx^\mu \otimes \Tr (\iota_\mu d\rho \dwedge \mathcal{P}) +L_\xi \mathcal{X}, -[\rho, \mathcal{P}]+L_\xi \mathcal{P}),\\
&\ell_2 ((e_1, \omega_1),(e_2,\omega_2))= - (e_1 \dwedge d \omega_2 + e_2 \dwedge d \omega_1 , e_1 \dwedge d e_2 + e_2 \dwedge d e_1), \text{ and}\\
&\ell_2 ((e,\omega),(E,\Omega)) = (dx^\mu \otimes \Tr(\iota_\mu de \dwedge E - \iota_\mu d\omega \dwedge \Omega - \iota_\mu e \dwedge dE + \iota_\mu \omega \dwedge d\Omega), \frac{3}{2} E \barwedge e + [\omega, \Omega]).
\end{align*}

These brackets contain a lot of notation, so let us recall some of it from Section 5 of \cite{CGRS}.  First, the ``double wedge" symbol, $\dwedge$, is the map
\[
\dwedge \colon \Omega^{k_1}(M,\Lambda^{l_1} \RR^{1,3}) \otimes \Omega^{k_2} (M, \Lambda^{l_2} \RR^{1,3}) \to \Omega^{k_1 + k_2} (M, \Lambda^{l_1 + l_2} \RR^{1,3}),
\]
which wedges the differential forms and the exterior algebra components. In \cite{CGRS}, this map is notated by $\curlywedge$, but we have chosen the double wedge symbol to emphasize its definition and to alleviate confusion for those of us with old eyeballs. (Many authors also simply use $\wedge$ for this operation.)  

The other wedge symbol, $\wedgedot$, is the exterior product of form components and the action of $\mathfrak{so}(1,3)$ on exterior algebra components (``the multivector representation"), e.g., below there is a term of the form $\omega \wedgedot e$, this ends up in $\Omega^2 (M, \RR^{1,3})$ as the form degrees add then the Lie algebra component of $\omega$ acts on the vector representation component of $e$. There is one additional term: $E \barwedge e$, where $E \in \Omega^3 (M, \Lambda^3 \RR^{1,3})$, $e \in \Omega^1 (M, \RR^{1,3})$, and where the result lives in $\Omega^4 (M, \Lambda^2 \RR^{1,3})$. This term is given by applying the Hodge dual (with respect to the Minkowski metric $\eta$), which is an isomorphism $\Lambda^3 \RR^{1,3} \cong \Lambda^{4-3} \RR^{1,3} \cong \RR^{1,3}$, to the exterior algebra component of $E$ and then applying ``$\dwedge$" with the form $e$. 

%


Throughout, ``$\cdot$" denotes the appropriate action, e.g., $\xi \cdot \rho$ is the vector field $\xi$ acting as a derivation of the vector-valued function $\rho$. As is standard, $L_\xi$ denotes Lie derivative. The operator $\iota_\mu$ is contraction with the coordinate vector field indexed by $\mu$.  The trace map, $\Tr$, is induced by the map $\Tr \colon \Lambda^4 \RR^{1,3} \xrightarrow{\; \simeq \;} \RR$ normalized by requiring that $\Tr(u_\mu \wedge u_\nu \wedge u_\alpha \wedge u_\beta) = \epsilon_{\mu \nu \alpha \beta}$.

Finally, there is one 3-ary bracket
\begin{align*}
 \ell_3 ((e_1, &\omega_1), (e_2, \omega_2), (e_3, \omega_3))\\
&= -\left (e_1 \dwedge [\omega_2 ,\omega_3] +e_2 \dwedge [\omega_1, \omega_3] + e_3 \dwedge [\omega_2 , \omega_1] + 3! \Lambda e_1 \dwedge e_2 \dwedge e_3, \right.\\
& \left. e_1 \dwedge (\omega_2 \wedgedot e_3) + {}_{(2 \leftrightarrow 3)}+e_2 \dwedge (\omega_1 \wedgedot e_3) + {}_{(1 \leftrightarrow 3)} + e_3 \dwedge (\omega_2 \wedgedot e_1) + {}_{(2 \leftrightarrow 1)} \right )
\end{align*}

That the brackets $\{\ell_1,\ell_2,\ell_3\}$ equip $\cM_{\text{ECP}}$ with the structure of an $L_\infty$-algebra is shown using two methods in the appendices of \cite{CGRS}.  First, the Jacobi relations are checked explicitly in Appendix A (in the three dimensional case, though the four-dimensional case is structurally similar).  Then, in Appendix B, the authors use a field theoretic (BRST--BV) argument to show that the $L_\infty$-structure is well-defined.

There is a pairing $\langle -, - \rangle \colon \cM_c \otimes \cM_c \to \RR$ of degree $-3$ given as follows
\[
\langle (e,\omega), (E,\Omega) \rangle := \int_M \Tr(e \dwedge E + \Omega \dwedge \omega), \quad (e,\omega) \in \cM^1, \;
(E,\Omega) \in \cM^2,\]
and
\[
\langle (\xi, \rho),(\mathcal{X}, \mathcal{P}) \rangle := \int_M \iota_\xi \mathcal{X} + \int_M \Tr(\rho \dwedge \mathcal{P}), \quad (\xi,\rho) \in \cM^0, \; (\mathcal{X}, \mathcal{P}) \in \cM^3.
\]
All other combinations of fields are zero for degree reasons.

\begin{prop}
The $L_\infty$-algebra $(\cM_{\text{ECP}}, \{\ell_1,\ell_2,\ell_3\}, \langle-,-\rangle)$ is a cyclic local $L_\infty$-algebra.
\end{prop}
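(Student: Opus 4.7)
The plan is to verify three things, since the generalized Jacobi identities for $\{\ell_1,\ell_2,\ell_3\}$ have been established in the appendices of \cite{CGRS} and may be assumed: (i) that $\cM_{\text{ECP}}$ is the space of global sections of a graded vector bundle on $M$; (ii) that all three brackets are polydifferential; and (iii) that $\langle-,-\rangle$ arises from a graded symmetric, non-degenerate, degree $-3$ pairing of vector bundles that is invariant with respect to the brackets in the sense of Section \ref{sect:LCFT}.

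For (i), I would exhibit the underlying bundle directly: set $L := L^0 \oplus L^1 \oplus L^2 \oplus L^3$ with
\[
L^0 = TM \oplus (M \times \mathfrak{so}(1,3)), \quad L^1 = T^\ast M \otimes (M \times (\RR^{1,3} \oplus \mathfrak{so}(1,3))),
\]
\[
L^2 = \Lambda^3 T^\ast M \otimes (M \times (\Lambda^3 \RR^{1,3} \oplus \Lambda^2 \RR^{1,3})), \quad L^3 = (T^\ast M \otimes \Dens_M) \oplus (\Lambda^4 T^\ast M \otimes (M \times \Lambda^2 \RR^{1,3})),
\]
placed in cohomological degrees $0,1,2,3$ respectively, so that $\cM_{\text{ECP}} = \Gamma(M,L)$. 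For (ii), every summand appearing in $\ell_1$, $\ell_2$, and $\ell_3$ is visibly a composition of $d$, $L_\xi$, $\iota_\mu$, wedge operations on forms valued in a trivial bundle, the fiberwise Lie bracket of $\mathfrak{so}(1,3)$, the Lie bracket of vector fields, and the trace induced by $\epsilon_{\mu\nu\gamma\beta}$; each of these is polydifferential, so a finite sum of such compositions is too.

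For (iii), the degree $-3$ claim is verified by observing that $\langle \cM^p, \cM^q\rangle$ vanishes unless $p+q=3$, so the pairing descends from a bundle map $L \otimes L \to \Dens_M [-3]$. Graded symmetry at the bundle level reduces to the graded commutativity of the wedge operations on forms valued in the exterior algebra of $\RR^{1,3}$: the sign from swapping form degrees cancels the one from swapping exterior algebra degrees in the relevant bidegrees. Non-degeneracy is a fiberwise check: in $\cM^0$--$\cM^3$, $TM$ pairs with $T^\ast M \otimes \Dens_M$ by canonical evaluation while $\mathfrak{so}(1,3)$ pairs with $\Lambda^4 T^\ast M \otimes \Lambda^2 \RR^{1,3}$ via the non-degenerate trace on $\Lambda^2 \RR^{1,3} \cong \mathfrak{so}(1,3)$; in $\cM^1$--$\cM^2$, the pairing combines the non-degenerate wedge pairing of forms of complementary degree with Hodge duality on $\Lambda^\bullet \RR^{1,3}$ induced by the Minkowski metric.

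The real obstacle is invariance, i.e., graded antisymmetry of $\langle \ell_k(\alpha_1,\dotsc,\alpha_k),\alpha_{k+1}\rangle$ in all of its arguments. My preferred route is indirect: the associated action
\[
S_{\cM_{\text{ECP}}}(\alpha) = \sum_{k=1}^{3}\frac{1}{(k+1)!}\langle \alpha,\ell_k(\alpha,\dotsc,\alpha)\rangle
\]
coincides (up to conventions) with the Cattaneo--Schiavina BV action for ECP gravity from \cite{CS1}, which is known to satisfy the classical master equation. In the cyclic $L_\infty$ dictionary of Section \ref{sect:LCFT}, the CME is equivalent to the generalized Jacobi identities together with invariance of the pairing; since the former are supplied by \cite{CGRS}, invariance is obtained for free. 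A direct verification is also feasible bracket by bracket: the $\ell_1$ case reduces to Stokes' theorem on compactly supported sections, while the $\ell_2$ and $\ell_3$ cases follow from ad-invariance of the trace on $\Lambda^4 \RR^{1,3}$, the Cartan calculus identities among $d$, $L_\xi$, and $\iota_\xi$, and the graded commutativity of the wedge operations.
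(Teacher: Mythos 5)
Your overall strategy --- exhibit the underlying graded bundle, observe that the brackets are polydifferential, check the degree and non-degeneracy of the pairing fiberwise, and isolate invariance as the only substantive point --- is exactly the paper's. The explicit bundle you write down and the fiberwise non-degeneracy argument are fine (and more detailed than the paper, which simply asserts these points). For invariance, the paper cites Section 5.2 of \cite{CGRS}, i.e., the direct bracket-by-bracket verification that you mention only as a fallback.

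The route you actually prefer for invariance has a genuine gap. The equivalence you invoke --- that the classical master equation for $S_{\cM_{\text{ECP}}}$ is equivalent to the generalized Jacobi identities together with invariance of the pairing --- fails in the direction you need. For a fixed non-degenerate pairing, the functional $S_\cM(\alpha)=\sum_k \tfrac{1}{(k+1)!}\langle\alpha,\ell_k(\alpha,\dotsc,\alpha)\rangle$ records only the totally graded-antisymmetric (cyclic) part of the tensors $\langle\ell_k(\alpha_1,\dotsc,\alpha_k),\alpha_{k+1}\rangle$; if the pairing is not invariant, $S_\cM$ does not determine the $\ell_k$ at all, and $\{S_\cM,S_\cM\}=0$ is a condition on the cyclic symmetrizations of the brackets rather than on the brackets themselves. (Toy example: an abelian complex with $\ell_1=D$, $D^2=0$, $D$ not graded (anti-)self-adjoint but with vanishing antisymmetric part; then $S=0$ satisfies the CME and Jacobi holds, yet the pairing is not invariant.) So ``CME $+$ Jacobi $\Rightarrow$ invariance'' is not a valid inference, and invariance is not obtained for free. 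What is true is that brackets \emph{defined} by polarizing the homogeneous components of a BV action with respect to a non-degenerate pairing are compatible with that pairing by construction; to exploit the Cattaneo--Schiavina action you would therefore need to verify that the CGRS brackets agree on the nose with those polarizations, including the antifield sectors and not merely the $(e,\omega)$ sector where the comparison of actions is usually checked --- and that verification is essentially the content of Section 5.2 of \cite{CGRS} that the paper cites. Either carry out the direct check you sketch in your last sentence or cite that section; the indirect argument as stated does not close the loop.
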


\begin{proof}
As $\cM_{\rm ECP}$ is made from vector-valued tensor fields, it is clear that it arises as the sections of a graded vector bundle $E_\text{ECP} \to M$.  Moreover, the brackets $\{\ell_1, \ell_2, \ell_3\}$ are all built from poly-differential operators on $E_\text{ECP}$. The pairing on $\cM_\text{ECP}$ indeed arises from a non-degenerate pairing $E_\text{ECP} \otimes E_\text{ECP} \to \mathrm{Dens}_M$ induced by the fiberwise trace pairing on $\Lambda^\ast \RR^{1,3}$ and the canonical pairing between vector fields and 1-forms.  The non-trivial check is that $\langle -, - \rangle$ has the appropriate invariance/compatibility, but this is the content of Section 5.2 of \cite{CGRS}.
\end{proof}

Recall our discussion of shifts and signs from Remark \ref{rmk:shift}.
Thus, for the ECP $L_\infty$-algebra $\cM_{\rm ECP},$ the action looks like 
\[
S_\text{ECP} (\alpha) = \frac{1}{2} \langle \alpha, \ell_1 (\alpha) \rangle -\frac{1}{6} \langle \alpha, \ell_2 (\alpha, \alpha) \rangle - \frac{1}{24} \langle \alpha , \ell_3 (\alpha, \alpha, \alpha) \rangle.
\]

After restricting to classical fields $(e,\omega) \in \cM^1$, we obtain the equality
\begin{align*}
S_\text{ECP}(e,\omega)
&= \frac{1}{2} \langle (e,\omega), \ell_1((e,\omega)) \rangle
 -\frac{1}{6} \langle (e,\omega), \ell_2((e,\omega),(e,\omega)) \rangle \\
&\quad\quad -\frac{1}{24} \langle (e,\omega), \ell_3((e,\omega),(e,\omega),(e,\omega)) \rangle \\
&= \frac{1}{2} \langle (e,\omega), (0,0) \rangle
 -\frac{1}{6} \langle (e,\omega),  (-2 e \dwedge d\omega, -2e\dwedge de) \rangle\\&
 \quad\quad -\frac{1}{24} \langle (e,\omega), (-3e\dwedge [\omega,\omega]-6\Lambda e\dwedge e \dwedge e, -6 e\dwedge(\omega\wedgedot e)) \rangle \\
&= \int_M \operatorname{Tr} \left(\frac{1}{3}e\dwedge e\dwedge d\omega +\frac{1}{3}e\dwedge  de \dwedge \omega \right)+\int_M \operatorname{Tr} \left(\frac{1}{8}e\dwedge e\dwedge [\omega,\omega] +\frac{1}{4} e\dwedge (\omega\wedgedot de)\dwedge\omega \right)\\&\quad\quad +\int_M \operatorname{Tr} \left(\frac{\Lambda}{4}e\dwedge e\dwedge e\dwedge e\right)\\
&= \int_M \operatorname{Tr} \left(\frac{1}{3}e\dwedge e\dwedge d\omega -\frac{1}{3}\omega\dwedge e\dwedge  de \right)+\int_M \operatorname{Tr} \left(\frac{1}{8}e\dwedge e\dwedge [\omega,\omega] -\frac{1}{4}\omega\dwedge e\dwedge (\omega\wedgedot de) \right)\\&\quad\quad +\int_M \operatorname{Tr} \left(\frac{\Lambda}{4}e\dwedge e\dwedge e\dwedge e\right).
\end{align*}

Here, in this formal computation of the local functional, we work either with compactly supported fields/variations or, equivalently, in the quotient of Lagrangian densities by total derivatives. Hence, the boundary integral is zero and integration by parts gives
\[
0=\int_M d\,\operatorname{Tr} (e\dwedge e\dwedge \omega)=\int_M \operatorname{Tr}(2\,\omega\dwedge e\dwedge  de)+\int_M \operatorname{Tr} ( e\dwedge e\dwedge d\omega) ,
\] hence 
\[
\int_M \operatorname{Tr} \left(\frac{1}{3}e\dwedge e\dwedge d\omega -\frac{1}{3}\omega\dwedge e\dwedge  de \right)=\int_M \operatorname{Tr} \left(\frac{1}{2}e\dwedge e\dwedge d\omega \right).
\]

 Since ${\rm SO}(1,3)$ preserves $ u_0\wedge u_1\wedge u_2\wedge u_3$, $\mathfrak{so}(1,3)$ acts on $\Lambda^4\mathbb{R}^{1,3}$ trivially.  Thus, we get 
\[
e_1\dwedge e_2 \dwedge [\omega,\omega']=e_1\dwedge(\omega\wedgedot e_2)\dwedge\omega'+e_2\dwedge(\omega\wedgedot e_1)\dwedge\omega',
\] hence $e\dwedge e \dwedge [\omega,\omega]=2e\dwedge(\omega\wedgedot e)\dwedge\omega=-2\omega\dwedge e\dwedge(\omega\wedgedot e)$ and 
\[
\frac{1}{8}e\dwedge e\dwedge [\omega,\omega] -\frac{1}{4}\omega\dwedge e\dwedge (\omega\wedgedot de)=\frac{1}{4}e\dwedge e\dwedge [\omega,\omega].
\]
And, with $F_\omega=d\omega+\frac{1}{2}[\omega,\omega],$ our action becomes 
\begin{align*}
    S_\text{ECP}(e,\omega)&=\int_M \operatorname{Tr} \left(\frac{1}{2}e\dwedge e\dwedge d\omega 
    +\frac{1}{4}e\dwedge e\dwedge [\omega,\omega]+\frac{\Lambda}{4}e\dwedge e\dwedge e\dwedge e\right)\\&= \int_M \operatorname{Tr} \Bigl(
   \frac{1}{2} \, e \dwedge e \dwedge F_\omega
 + \frac{\Lambda}{4} \, e \dwedge e \dwedge e \dwedge e
\Bigr).
\end{align*}
That is, $S_\text{ECP}$ recovers the classical Einstein--Cartan--Palatini functional.

\section{Conserved Charges from the CGRS $L_\infty$-algebra: Black Hole Entropy}\label{sect:charges}

We are continuing the conventions of the previous section.  In particular, M is a four-dimensional spacetime, and the tetrads take values in the internal Minkowski space $\mathbb{R}^{1,3}$,  and $(\cM_{\text{ECP}}, \{\ell_1,\ell_2,\ell_3\}, \langle-,-\rangle)$ is the cyclic local $L_\infty$-algebra of \cite{CGRS}. The only exterior-type product used below is the one we denoted ``$\dwedge$" above, so as there is no possibility of confusion, we will follow the common convention and simply use ``$\wedge$" for this operation. Notice that $\omega\wedge\omega$ in $F_\omega$ has the usual meaning $\frac{1}{2}[\omega,\omega].$

By Theorem \ref{thm:main}, we know that $\cM_\text{ECP}$ acts on itself by the infinity adjoint action.  Moreover, this action restricts to any subalgebra.  We will consider the subalgebra $\mathrm{Vect (M)} \subset \cM_\text{ECP}$. Following Section \ref{sect:currents}, we will compute the conserved current and charge of the infinitesimal symmetry encoded by a vector field.


\begin{thm}\label{thm:main2}
Let $\xi \in \mathrm{Vect}(M)$ be a Killing vector and set $\Lambda =0$.  Then, on shell, the current associated to $\xi$, 
\[
J[\xi] \colon \Omega^\ast_M \to \sO_\text{loc} (\cM)[-1],
\]
can be expressed locally,  where $\xi\neq 0$, by
\[
J[\xi](\beta)=\beta\wedge d\mathcal{Q}[\xi]\quad{\rm with}\quad \beta\in\Omega^1(M)\quad {\rm and}\quad  \mathcal{Q}[\xi]= \frac{1}{2}  \Tr \left (\iota_\xi \omega \wedge e \wedge e \right )
\] for some coframe $e$ and torsion-free spin connection $\omega.$ 
\end{thm}

\begin{cor}[{\bf Noether  surface charge}]\label{cor:main}
Let $N \subset M$ be a compact oriented submanifold with boundary $\Sigma := \partial N$.  Let $\xi \in \mathrm{Vect}(M)$ be a Killing vector and set $\Lambda =0$. Assume that $\xi$ does not vanish on $N.$ Then, on shell, the charge associated to $\xi$ is
\[
\mathcal{Q}[\xi, N] = \frac{1}{2} \int_\Sigma \Tr \left (\iota_\xi \omega \wedge e \wedge e \right ).
\]
\end{cor}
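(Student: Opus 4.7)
The plan is to apply the machinery of Sections \ref{sect:infinity} and \ref{sect:currents}. First, Theorem \ref{thm:main} provides the infinity adjoint action of $\cM_\text{ECP}$ on itself, which restricts to the Lie subalgebra $\mathrm{Vect}(M)\subset\cM^0$, hence to $\RR\langle\xi\rangle$. I would then use Lemma \ref{lem:ext} to lift the action of $\RR\langle\xi\rangle$ to a Hamiltonian action of the local dg Lie algebra $\cL:=\Omega^\ast_M\otimes\RR\langle\xi\rangle\cong\Omega^\ast_M$ on $\cM_\text{ECP}$. Theorem \ref{thm:main}(c) then produces an equivariant action of the form $S^\text{tot}=S^\cL+S_\text{ECP}$, from which the current map is extracted as $J[\xi]\colon\beta\mapsto\delta S^\text{tot}/\delta(\beta\otimes\xi)$.

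Next, I would unpack this functional derivative explicitly. Scanning the brackets of $\cM_\text{ECP}$ recalled in Section \ref{sect:CGRS}, the only $\ell_2$-couplings involving $\xi\in\mathrm{Vect}(M)$ act on $\cM^1$, $\cM^2$, and $\cM^3$ by Lie derivative, e.g.\ $\ell_2((\xi,0),(e,\omega))=(L_\xi e,L_\xi\omega)$. Combined with the degree $-3$ cyclic pairing (which pairs $\cM^0\leftrightarrow\cM^3$ and $\cM^1\leftrightarrow\cM^2$), the infinity adjoint formula assembles the piece of $S^\text{tot}$ linear in $\beta\otimes\xi$ into (upon integration against $\beta$) the classical Noether combination
\[
\Tr\!\left(\tfrac{1}{2}e\wedge e\wedge L_\xi\omega\right) - \iota_\xi\Tr\!\left(\tfrac{1}{2}e\wedge e\wedge F_\omega\right),
\]
that is, the pre-symplectic potential of ECP evaluated on the variation $L_\xi$, minus the interior product of the Lagrangian density with $\xi$.

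Then I would reduce on shell using the Euler--Lagrange equations of $S_\text{ECP}$: the vacuum Einstein equation $e\wedge F_\omega=0$ (which is where $\Lambda=0$ enters) and the torsion-free condition $d_\omega(e\wedge e)=0$. Decomposing $L_\xi\omega=d_\omega(\iota_\xi\omega)+\iota_\xi F_\omega$ via Cartan's magic formula and expanding $\iota_\xi$ as a graded derivation of $\tfrac{1}{2}e\wedge e\wedge F_\omega$, the $\iota_\xi F_\omega$ pieces cancel between the two terms, while the remaining $(\iota_\xi e)\wedge e\wedge F_\omega$ vanishes by the Einstein equation. Integrating the surviving $\Tr(\tfrac{1}{2}e\wedge e\wedge d\iota_\xi\omega)$ by parts, invoking the torsion-free equation, and using $\mathrm{Ad}$-invariance of $\Tr$ to kill the residual $[\omega,\iota_\xi\omega]$ terms yields $d\mathcal{Q}[\xi]$ with $\mathcal{Q}[\xi]=\tfrac{1}{2}\Tr(\iota_\xi\omega\wedge e\wedge e)$. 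The theorem then follows by linearity of the current map in $\beta\in\Omega^\ast_M$.

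The main obstacle will be the bookkeeping in the second step: ensuring that the combinatorial sum over subsets defining the infinity adjoint brackets $\ell_k^\ltimes$, together with contributions from $\ell_2$ on the antifield sectors $\cM^2$ and $\cM^3$, reproduces precisely the Noether combination $\theta(\delta_\xi)-\iota_\xi L$ with no spurious extras. Careful tracking of signs coming from the double-wedge product $\dwedge$ and from the graded symmetry of the $\mathfrak{so}(1,3)$-trace on $\Lambda^2\RR^{1,3}$ is essential. Once those sign conventions are pinned down, the on-shell reduction is essentially Wald's classical derivation of the Noether charge recast in ECP variables, with the Killing hypothesis entering to guarantee that $\mathcal{Q}[\xi]$ is conserved between Cauchy slices in the physical sense required for Corollary \ref{cor:main}.
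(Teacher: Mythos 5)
Your pipeline matches the paper's: infinity adjoint action restricted to $\mathrm{Vect}(M)$, extension via Lemma \ref{lem:ext} to $\Omega^\ast_M\otimes\RR\langle\xi\rangle$, extraction of the current from $S^\text{tot}$, and an on-shell reduction to $d\mathcal{Q}[\xi]$ via the standard variational identity $\Theta-\iota_\xi\mathsf{L}=dQ[\xi]+(\text{eom terms})$. The endpoint is right, but there is a genuine misstep in where the Killing hypothesis does its work. The functional derivative of the coupling term $\tfrac{1}{6}\int_M\Tr(2e^2\wedge A_\xi\cdot\omega-2\omega\wedge e\wedge A_\xi\cdot e)$ is \emph{not} the Noether combination $\theta(L_\xi)-\iota_\xi L$; it is the raw expression $\tfrac{1}{3}e^2\wedge L_\xi\omega-\tfrac{1}{3}\omega\wedge e\wedge L_\xi e$. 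The paper uses the Killing condition at exactly this point, via $L_\xi(e^2\wedge\omega)=0$ and the Leibniz rule, to convert $\omega\wedge e\wedge L_\xi e$ into $-\tfrac{1}{2}e^2\wedge L_\xi\omega$ and hence collapse the current to $\tfrac{1}{2}e^2\wedge L_\xi\omega=\theta(L_\xi)$. The $-\iota_\xi L$ term you insert is not produced by the bracket structure at all; it can only be added for free because $\iota_\xi\mathsf{L}$ vanishes on shell when $\Lambda=0$ (since $e\wedge F_\omega=0$ forces $e^2\wedge F_\omega=0$). As written, your second step asserts an off-shell identity for arbitrary $\xi$ that is false, and relegating the Killing hypothesis to a closing remark about conservation between Cauchy slices means your derivation has no step at which the two terms of the raw current actually get combined.

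Separately, your proposal is really a proof of Theorem \ref{thm:main2}; the corollary itself still needs the (routine but necessary) final step: with $J[\xi](\beta)=\beta\wedge d\mathcal{Q}[\xi]$, the charge is
\[
\mathcal{Q}[\xi;N]=\int_M J[\xi](\eta_N)=\int_M\eta_N\wedge d\mathcal{Q}[\xi]=\int_N d\mathcal{Q}[\xi]=\int_{\Sigma}\mathcal{Q}[\xi],
\]
using that $\eta_N$ is the Poincar\'e dual of $N$ and then Stokes' theorem on $N$ with $\Sigma=\partial N$. Your on-shell reduction in the third step (Cartan's magic formula, the vacuum Einstein equation, the torsion constraint, and $\mathrm{Ad}$-invariance of $\Tr$) is a sound and slightly more explicit rendering of the paper's Lemma, so once the Killing condition is moved to its correct place and the Stokes step is added, the argument closes.
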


 In the black-hole applications below, the corollary is applied to $N$  in the exterior region where $\xi\neq 0$ with $\partial N$ consisting of surfaces $\Sigma_r$ and $\Sigma_R$, $r<R.$ For the purpose of Wald's entropy we are only interested in the boundary component $\Sigma_r$ near 
the bifurcation-surface. After proving the theorem, we will compute the black hole entropy for the Schwarzschild black hole using the corollary.  After applying the theorem/corollary, this reduces to the standard textbook computation.  For us, it is not the result that is interesting, but rather that it can be done starting from the $L_\infty$-algebra formulation and applying the classical Noether Theorem of \cite{CG2}.

\subsection{Proving the Theorem}

To begin, we extend the action of $\mathrm{Vect}(M)$ to an action of $\Omega^\ast_M \otimes \mathrm{Vect} (M)$.  Lemma \ref{lem:ext} asserts that any extension which satisfies the master equation is unique, so we will use the ``minimal coupling." Hence, for $A_\xi \in \Omega^\ast_M \otimes \mathrm{Vect} (M)$, replacing $d$ in $S_{ECP}(e,\omega)$ by $d+A_{\xi}$, we have a total action
\[
S^\text{tot} (A_\xi, e, \omega) = S_{ECP} (e,\omega) + \frac{1}{6} \int_M \Tr \left ( 2e^2 \wedge A_\xi \cdot \omega - 2 \omega \wedge e \wedge A_\xi \cdot e \right ),
\]
where the action of $A_\xi$ is given by wedging the form component and acting via the Lie derivative by the vector field $\xi$ (as dictated by the bracket $\ell_2$).  

Hence, the 3-form component of the current $J[\xi]$, which we will denote by $J_3 [\xi]$, is given by
\[
J_3 [\xi] = {\rm Tr}\left[\frac{1}{3} e^2 \wedge L_\xi \omega - \frac{1}{3} \omega \wedge e \wedge L_\xi e\right].
\]

The theorem is proved by the following two lemmas.
\begin{lem}\label{lem:73} 
Let $\xi\in {\rm Vect}(M)$  and $\Lambda=0$. Then, on shell we have
\[
J_3[\xi]=d \mathcal{Q}[\xi]+\mathcal{A}[\xi],
\] 
where 
\[
\mathcal{A}[\xi]={\rm Tr}\left[-\frac{1}{6} e^2 \wedge L_\xi \omega - \frac{1}{3} \omega \wedge e \wedge L_\xi e\right].
\]
\end{lem}

\begin{lem}\label{lem:74}
Let $\xi$ be a Killing vector. Then,  locally, where $\xi\neq 0$, we have a coframe $e$ and a torsion-free spin connection $\omega$ such that $L_\xi e=0$ and $L_\xi\omega=0.$ 
\end{lem}

Indeed, locally, where $\xi\neq 0,$ Lemmas 7.3 and 7.4 give a coframe $e$ and a torsion-free spin connection $\omega$ such that   $J_3[\xi]=d \mathcal{Q}[\xi]$ with 
$\mathcal{Q}[\xi] = \frac{1}{2}  \Tr \left (\iota_\xi \omega \wedge e \wedge e \right ),$ proving the theorem. Thus, the corollary follows because, in view of Stokes's theorem, the charge associated to $\xi$ is
\[
\mathcal{Q}[\xi, N] = \int_N J_3[\xi]=\int_N d\mathcal{Q}[\xi]=\int_{\partial N} \mathcal{Q}[\xi]=\frac{1}{2} \int_\Sigma \Tr \left (\iota_\xi \omega \wedge e \wedge e \right ).
\]

\begin{proof}[Proof of Lemma \ref{lem:73}]
 
 Observe  that, if we consider only the Lie algebra part of $\omega$ in $\mathfrak{so}(1,3),$ then we have $e^{t\omega}\in{\rm SO}(1,3)$ for all $t\in\mathbb{R}$. Thus, for our orthonormal basis $u_0, u_1, u_2, u_3$ of the Minkowski space $\mathbb{R}^{1,3}$ we have 
 \begin{align*}
     e^{t\omega}u_a\wedge e^{t\omega}u_b\wedge e^{t\omega}u_c\wedge e^{t\omega}u_d&={(e^{t\omega})^p}_a{(e^{t\omega})^q}_b{(e^{t\omega})^r}_c{(e^{t\omega})^s}_d\,u_p\wedge u_q\wedge u_r\wedge u_s\\&={(e^{t\omega})^p}_a{(e^{t\omega})^q}_b{(e^{t\omega})^r}_c{(e^{t\omega})^s}_d\,\epsilon_{pqrs}\,u_0\wedge u_1\wedge u_2\wedge u_3\\&={\rm det}(e^{t\omega})\,\epsilon_{abcd}\,u_0\wedge u_1\wedge u_2\wedge u_3\\&=u_a\wedge u_b\wedge u_c\wedge u_d.
\end{align*} Upon differentiating at $t=0$ we get 
\[
(\omega u_a)\wedge u_b\wedge u_c\wedge u_d+u_a\wedge (\omega u_b)\wedge u_c\wedge u_d+u_a\wedge u_b\wedge (\omega u_c)\wedge u_d+u_a\wedge u_b\wedge u_c\wedge (\omega u_d)=0.
\] Now taking the trace we get
\[
0=-{\omega^p}_a\,\epsilon_{pbcd}- {\omega^p}_b\,\epsilon_{apcd}-{\omega^p}_c\,\epsilon_{abpd}-{\omega^p}_d\,\epsilon_{abcp}=d_\omega \epsilon_{abcd}.
\]

Thus, observing that $\mathcal{Q}[\xi]$ is a Lorentz scalar, we have 
\begin{align*}
2\, d\mathcal{Q}[\xi]&=d\,(\epsilon_{abcd} \,\iota_\xi\omega^{ab}\wedge e^c\wedge e^d) =d_\omega\,(\epsilon_{abcd} \,\iota_\xi\omega^{ab}\wedge e^c\wedge e^d) \\&=
\epsilon_{abcd} \,d_\omega(\iota_\xi\omega^{ab})\wedge e^c\wedge e^d+\iota_\xi\omega^{ab}(\epsilon_{abcd} \, d_\omega e ^c\wedge e^d) +\iota_\xi\omega^{ab}(\epsilon_{abcd} \, e^c\wedge d_\omega e^d)
\\&=\epsilon_{abcd} \,d_\omega(\iota_\xi\omega^{ab}) e^c\wedge e^d\,={\rm Tr}\,(d_\omega(\iota_\xi\omega)\wedge e^2),
\end{align*} 
where we used the second equation of motion in the second equality. 

Using the Cartan identity we have 
    \[
    L_\xi\omega=\iota_\xi d\omega+d\iota_\xi\omega=\iota_\xi(F_{\omega}-\omega\wedge \omega)+d\iota_\xi\omega =\iota_\xi F_\omega+d_\omega(\iota_\xi\omega).
     \] 
     On the other hand, using the first equation of motion, $e\wedge F_\omega=0$, we have $e\wedge\iota_\xi F_\omega=\iota_\xi e\wedge F_{\omega}$, thus ${\rm Tr}(e^2\wedge\iota_{\xi} F_{\omega})=-{\rm Tr}((\iota_\xi e)e\wedge F_\omega)=0$. 
Then, we have 
\[
d\mathcal{Q}[\xi]= {\rm Tr}\left[\frac{1}{2} L_\xi \omega\wedge e^2 - \frac{1}{2} \iota_\xi F_\omega\wedge e^2\right]={\rm Tr}\left[\frac{1}{2} e^2\wedge L_\xi \omega \right],
\]

proving
\[
J_3[\xi]-d\mathcal{Q}[\xi]= {\rm Tr}\left[-\frac{1}{6} e^2\wedge L_\xi \omega - \frac{1}{3} \omega\wedge e\wedge L_\xi e\right].
\]

\end{proof}

\begin{proof}[Proof of Lemma \ref{lem:74}]

We start with any local coframe $e^a$ about a point $p$ with $\xi(p)\neq 0.$ Let $L_\xi e^a={\lambda^a}_b e^b.$ Then, since $\xi$ is a Killing vector, we have 
\[
0=L_\xi g=L_\xi(\eta_{ab} e^a\otimes e^b)=\eta_{ab}(L_\xi e^a\otimes e^b+e^a\otimes L_\xi e^b)=(\lambda_{ab}+\lambda_{ba})\,e^a\otimes e^b,
\] hence  $\lambda_{ab}+\lambda_{ba}=0,$ or $\lambda^T\eta+\eta\lambda=0.$ We can solve the ODE 
\[
\xi({\Lambda^a}_b)+{\Lambda^a}_c{\lambda^c}_b=0, \quad{\rm or}\quad \xi\Lambda=-\Lambda\lambda
\] to get a solution ${\Lambda}$ on an open ball $U\subset M$ with $p\in U.$
We claim we can arrange so that $\Lambda$ has values in $ {\rm SO}(1,3).$  

Indeed, let $K$ be the intersection of the ball $U$ and the affine hypersurface $H$ of codimension one containing $p$ and orthogonal to $\xi(p)$ in the Euclidean inner product. Shrinking the ball if necessary we can assume that $H$ is not parallel to $\xi(q)$ at any $q\in U$. As a Cauchy problem we can impose the initial condition that $\Lambda=I$ on $K.$

For each $q\in K$ we can consider  an integral curve $\gamma(s)$ of $\xi$ with $\gamma(0)=q$. Putting $Q(s)=\Lambda^T(\gamma(s))\,\eta\,\Lambda(\gamma(s))$, we have  $dQ/ds=-\lambda^TQ-Q\lambda.$ Also, $Q(0)=\eta$ and $dQ/ds(0)=0$.  Since the constant $\eta$ is a solution, the uniqueness theorem for ODEs implies $Q(s)=\eta$ for all $s$, which proves that $\Lambda\in {\rm SO}(1,3)$ along the integral curve.  Shrinking if necessary, we may assume that $U$ is foliated by integral curves. This proves our claim.

Now, we define a new coframe by $e'^a={\Lambda^a}_be^b$ and check \[
L_\xi e'^a=(L_\xi{\Lambda^a}_b)e^b+{\Lambda^a}_bL_\xi e^b=[\xi({\Lambda^a}_b)+{\Lambda^a}_c{\lambda^c}_b ]e^b=0.
\] The torsion-free condition, $de'^a+{\omega'^a}_b\wedge e'^b=0$, determines a spin connection $\omega'^{ab}.$
Then, we have 
\[
0=L_\xi(de'^a+{\omega'^a}_b\wedge e'^b )=d(L_\xi e'^a)+(L_\xi {\omega'^a}_b)\wedge e'^b+{\omega'^a}_b\wedge L_\xi e'^b=(L_\xi {\omega'^a}_b)\wedge e'^b. 
\] Let $A_{ab}=L_\xi\omega'_{ab}.$ From $\omega'_{ab}=-\omega'_{ba}$ we know $A_{ab}=-A_{ba}$. Writing $A_{ab}=A_{abc}e'^c,$  we have $A_{abc}=-A_{bac}$. On the other hand, from \[
0=(L_\xi {\omega'^a}_b)\wedge e'^b=A_{ab}\wedge e'^b=A_{abc}e'^c\wedge e'^b
\] we see that $A_{abc}=A_{acb}.$ Hence, we have \[
A_{abc}=-A_{bac}=-A_{bca}=A_{cba}=A_{cab}=-A_{acb}=-A_{abc},
\]which shows that $A_{abc}=0$ for all $a,b,c$. We have proved that  $A_{ab}=L_\xi \omega'_{ab}=0.$

\end{proof}


\subsection{Computations for Schwarzschild Spacetime}

Recall that Schwarzschild spacetime is a static, rotationally symmetric, four-dimensional, vacuum solution of Einstein's equations with a black hole region. More precisely, it is a one-parameter family of such solutions, depending on the Schwarzschild radius $r_S >0$. Fix such a radius $r_S >0$.  We model our four-dimensional Schwarzschild spacetime on the set $M=\RR \times (\RR^3 - \{(0,0,0)\})$ with appropriate metric.  The boundary of the black hole region, the horizon, occurs at $r = r_S$ (in spherical spatial coordinates), and is a \emph{Killing horizon} with Killing field $\xi = \frac{1}{c} \partial_t$. The two-sphere
\[
\Sigma_{r_S} := \{t=0, r=r_S\} \subset \RR \times (\RR^3 - \{(0,0,0)\})
\]
is the spacelike codimension 2 bifurcation surface.

The Schwarzschild radius and the mass ${\bf m}$ are proportional via
\[
r_S = \frac{2G {\bf m}}{c^2},
\]
where $G$ is the gravitational constant and $c$ is the speed of light.  In what follows we will also introduce two further constants: the Boltzmann constant $k_B$ and the reduced Planck constant $\hbar$. Keeping track of these constants doesn't change the mathematics, but does allow us to compare to other computations of various physical quantities. 

We will encode Schwarzschild space via the \emph{static diagonal tetrad} and spin connection.  Let $f(r)$ be the positive square root of the equation
\[
(f(r))^2= 1-\frac{2G {\bf m}}{c^2 r}.
\]
Then, define a tetrad by
\[
e^0 = cf(r) dt, \quad e^1 = \frac{1}{f(r)} dr, \quad e^2 = r d \theta , \quad e^3 = r \sin \theta d \phi,
\]
with associated spin connection $\omega$ given by
\[
\omega^{01}= f' e^0, \quad \omega^{12}= - \frac{f}{r} e^2, \quad \omega^{13}=-\frac{f}{r} e^3, \quad \omega^{23} = - \frac{\cot \theta}{r} e^3,
\]
and subject to $\omega^{ab} = -\omega^{ba}$, with all other components zero. Observe that $e^a$ and $\omega^{ab}$ are regular for $r>r_S$. Using the definition of surface gravity $\xi^{\nu}\Delta_\nu\xi^\mu=\kappa\xi^\mu$ and moving to the Eddington--Finkelstein coordinates $(v,r,\theta,\phi),$\ where $v=t+r^*, dr^*/dr=1/f^2(r),$ we get 
\[
\kappa=\lim_{r\downarrow r_S} f(r)f'(r)=\frac{1}{2r_S}.
\]


Following Wald \cite{Wald}, for an asymptotically flat (stationary) black hole spacetime, we define the \emph{black hole entropy} to be given by
\[
\cS_{BH} := 
\frac{k_B}{\hbar}\frac{2 \pi}{\kappa }\frac{c^3}{16\pi G} {Q}[\xi]_H,
\] where the first factor is for the conversion from the action to the entropy, the second is like $1/T$ in the first law of thermodynamics $dS=dE/T$ and the final factor restores the normalization of the physical Einstein–Hilbert action.  Here ${Q}[\xi]_H$ means the integral of $\mathcal{Q}[\xi]$ over the bifurcation surface $\Sigma_{r_S}.$ Precisely speaking, it is the limit of the integrals of $\mathcal{Q}[\xi]$ over $\Sigma_{r}$ as $r\downarrow r_S.$  
For this class of spacetimes, the Wald entropy agrees with most other definitions of black hole entropy \cite{Wald1}.

We now check the conditions in Lemma 7.4, that is, $L_\xi e=0$ and $L_\xi \omega=0.$ Put $x^\mu=(ct,r,\theta,\phi)$. Then $\xi^\mu=(1,0,0,0)$ and for a 1-form $\alpha=\alpha_\mu dx^\mu$ we have $L_\xi \alpha=(\xi^\nu\partial_\nu \alpha_\mu+\alpha_\nu\partial_\mu \xi^\nu)dx^\mu=(\partial_0\alpha_\mu)dx^\mu=0$ if $\alpha_\mu$ is independent of $t$: observe that $e^a_0$ and $\omega^{ab}_0$ are independent of $t.$ This works in the region $r>r_S,$ where $\xi\neq 0$ and $e^a, \omega^{ab}$ are regular.

Continuing our computation for Schwarzschild (in the static diagonal tetrad), we have
\[
\mathcal{Q}[\xi] =  \frac{1}{2}  \Tr \left (\iota_\xi \omega \wedge e \wedge e \right )= 2\,f(r)f'(r)\, e^2 \wedge e^3 = 2\,r^2f(r)f'(r) \sin \theta \,d\theta \wedge d \phi.
\]
Hence,
\[
\int_{\Sigma_{r}} \mathcal{Q} [\xi] = 2\int_0^{2\pi} \int_0^\pi  r^2 f(r)f'(r)  \sin \theta\, d\theta \wedge d \phi 
= 2f(r)f'(r)\,4\pi r^2
\]and 
\[
{Q} [\xi]_H=\lim_{r\downarrow r_S}\int_{\Sigma_{r}} \mathcal{Q} [\xi]=2\kappa\,{\rm Area}(\Sigma_{r_S}).
\]
Therefore, for the Schwarzschild black hole we have
\[
\cS_{BH} = 
\frac{k_B}{\hbar}\frac{2 \pi}{\kappa }\frac{c^3}{16\pi G} \,2\kappa\,\text{Area}(\Sigma_{r_S})=\frac{k_B c^3}{4\hbar G}  \text{Area}(\Sigma_{r_S})=\frac{k_B }{4}  \frac{\text{Area}(\Sigma_{r_S})}{\ell^2_P},
\]
in accordance with Bekenstein's area law. (Here, $\ell_P$ is the Planck length.)

\subsection{Computations for Kerr Spacetime}

For our Kerr spacetime we use two parameters, ${\bf m}$ for the mass and $J$ for the angular momentum. We put\[
a=\frac{J}{{\bf m}c}, \quad r_S=\frac{2G{\bf m}}{c^2}, \quad \rho^2={r^2+a^2 \cos^2\theta,\,\,\rho>0,} \quad \Delta=r^2+a^2-r_Sr.
\] Then, using the  Boyer--Lindquist coordinates we define the metric
\begin{align*}
ds^2&=-\left(1-\frac{r_Sr}{\rho^2}\right)c^2dt^2-\frac{2r_Sra\sin^2\theta}{\rho^2}cdtd\phi
+\frac{\rho^2}{\Delta}dr^2+\rho^2d\theta^2\\&\quad\quad+\left( r^2+a^2+\frac{r_Sra^2\sin^2\theta}{\rho^2}\right)\sin^2\theta
d\phi^2\\
&=-\frac{\Delta}{\rho^2}\left(cdt-a\sin^2\theta d\phi\right)^2
+\frac{\rho^2}{\Delta}dr^2+\rho^2d\theta^2+\frac{\sin^2\theta}{\rho^2}\left[(r^2+a^2)d\phi-acdt   \right]^2.
\end{align*}
We will use the Carter orthonormal tetrad
\[
e^0=\frac{\sqrt{\Delta}}{\rho}(cdt-a\sin^2\theta d\phi),\,\,\,\,e^1=\frac{\rho}{\sqrt{\Delta}}dr,\,\,\,\,e^2=\rho d\theta,\,\,\,\,e^3=\frac{\sin\theta}{\rho}\left[(r^2+a^2)d\phi-acdt\right].
\]

Upon imposing the torsion-free condition, $de^a+{\omega^a}_b\wedge e^b=0$, we get the spin connection whose nonzero terms are 
\begin{align*}
    \rho^3\omega^{01}&=-\frac{a^2(2r\sin^2\theta+r_S\cos^2\theta)-r_Sr^2}{2\sqrt{\Delta}}\,e^0-ar\sin\theta \,e^3\\
    \rho^3\omega^{02}&=-a^2\sin\theta\cos\theta \,e^0-a\sqrt{\Delta}\cos\theta \,e^3  \\
    \rho^3\omega^{03}&= -ar\sin\theta \,e^1+a\sqrt{\Delta}\cos\theta \,e^2   \\
    \rho^3\omega^{12}&= -a^2\sin\theta\cos\theta \,e^1-r\sqrt{\Delta} \,e^2   \\
    \rho^3\omega^{13}&=  -ar\sin\theta \,e^0-r\sqrt{\Delta} \,e^3  \\
    \rho^3\omega^{23}&= -a\sqrt{\Delta}\cos\theta \,e^0-(r^2+a^2)\cot \theta \,e^3  
\end{align*}  along with the antisymmetric counterparts. Observe that $e^a, \omega^{ab}$ are regular for $r>r_+.$

\vspace{1em}
Observe that when $\Delta=r^2+a^2-r_Sr=0$ we have two horizons. To avoid the extremal case we assume $r_S>2|a|,$ that is, the slow-rotation regime. We put $\rho_+=\rho(r_+,\theta).$ We are interested in the outer horizon $H$ at $r=r_+.$ Let 
\[
\xi=\frac{1}{c}{\partial_ t}+\Omega_H{\partial_\phi} \quad {\rm with} \quad \Omega_H=\frac{a}{r_Sr_+}.
\]  Then $\xi$ is a Killing vector since $g_{\mu\nu}=g_{\mu\nu}(r,\theta).$ Similarly both $e^a$ and $\omega^{ab}$ are of the form $\alpha=\alpha_\mu dx^{\mu},$ where $x^{\mu}=(ct,r,\theta,\phi)$ and  $\alpha_\mu=\alpha_\mu(r,\theta).$ Hence \[(L_\xi\alpha)_\mu=\xi^\nu\partial_\nu\alpha_\mu=\partial_0\alpha_\mu+\Omega_H\partial_3\alpha_\mu=0,\] showing that $L_\xi e^a=0$ and $L_\xi\omega^{ab}=0$ as in Lemma 7.4.

We claim that $H$ is a Killing horizon. First, the normal vector to $H$ is $n_\mu=\partial_\mu r=(0,1,0,0)$ and $n_\mu n^\mu=g^{\mu\nu}\partial_\mu r\partial_\nu r=g^{11}=\Delta/\rho^2=0$ on $H$, hence $H$ is a null hypersurface. And $\xi$ is null on $H$ since 
\begin{align*}
  \xi_\mu\xi^\mu&=g_{00}+2\Omega_H g_{03}+\Omega^2_Hg_{33}\\
  &=-\left(1-\frac{r_Sr_+}{\rho^2_+}\right)-\frac{2\Omega_H r_Sr_+a\sin^2\theta}{\rho^2_+}
+\Omega^2_H\left( r^2_++a^2+\frac{r_Sr_+a^2\sin^2\theta}{\rho^2_+}\right)\sin^2\theta\\&=\rho^{-2}_+\left[
-\rho^2_+ +r_Sr_+-2a^2\sin^2\theta +(r_Sr_+ \rho^2_+ +r_Sr_+a^2\sin^2\theta)\frac{a^2\sin^2\theta}{r^2_Sr^2_+}\right]\\
&=\rho^{-2}_+(a^2\sin^2\theta-2a^2\sin^2\theta+a^2\sin^2\theta)=0.
\end{align*}
Clearly $H$ is  generated by  $\xi$ since $H$ is foliated by curves $x^\mu=\hat{x}^\mu+\lambda\xi^\mu$ with $\hat{x}^1=r_+.$

In order to compute the surface gravity we will use the formula 
\[
-\frac{1}{2}\left(\partial_\mu \xi^2\right)\big|_H=\kappa \xi_\mu\big|_H,
\]which follows from the Killing-horizon identity $\nabla_\mu(\xi^2)=-2\kappa\xi_\mu$ on $H$. However at $r=r_+$ we have a coordinate singularity. So, we introduce the ingoing Kerr coordinates $(v,r,\theta,\psi),$ where, with $\tau=ct,$ we have
\[
dv=d\tau+\frac{r^2+a^2}{\Delta}dr, \quad  d\psi=d\phi+\frac{a}{\Delta}dr.
\]This yields
\[
\begin{aligned}
ds^2={}&
-\left(1-\frac{r_S r}{\rho^2}\right)dv^2
+2\,dv\,dr
+\rho^2 d\theta^2
-\frac{2r_S r a\sin^2\theta}{\rho^2}\,dv\,d\psi  \\
&\quad
-2a\sin^2\theta\,dr\,d\psi
+
\frac{(r^2+a^2)^2-a^2\Delta\sin^2\theta}{\rho^2}
\sin^2\theta\,d\psi^2 .
\end{aligned}
\]

The horizon generator is
\[
\xi=\partial_v+\Omega_H\partial_\psi,
\]
in these new coordinates and 
\[
\begin{aligned}
\xi^2
&=g_{\mu\nu}\xi^\mu\xi^\nu
 = g_{vv}+2\Omega_H g_{v\psi}+\Omega_H^2 g_{\psi\psi} \\
&= -\left(1-\frac{r_S r}{\rho^2}\right)
 +2\Omega_H\left(-\frac{r_S r a\sin^2\theta}{\rho^2}\right)
 +\Omega_H^2\frac{\sin^2\theta}{\rho^2}
 \left[(r^2+a^2)^2-\Delta a^2\sin^2\theta\right].
\end{aligned}
\]
Equivalently,
\[
\begin{aligned}
{\rho^2} \xi^2
&= -{\rho^2}+r_S r-2\Omega_H r_S r a\sin^2\theta
 +\Omega_H^2(r^2+a^2)^2\sin^2\theta
 -\Omega_H^2\Delta a^2\sin^4\theta \\
&= -\Delta+a^2\sin^2\theta
 -\frac{2r_S r}{r_+^2+a^2}a^2\sin^2\theta
 +\left(\frac{r^2+a^2}{r_+^2+a^2}\right)^2a^2\sin^2\theta
 -\Delta\left(\frac{a^2}{r_+^2+a^2}\right)^2\sin^4\theta \\
&= -\Delta\left[1+\left(\frac{a^2}{r_+^2+a^2}\right)^2\sin^4\theta\right]
 +B(r)\,a^2\sin^2\theta\,,
\end{aligned}
\]
where
\[
    B(r)=1-\frac{2r_S r}{r_+^2+a^2}
        +\left(\frac{r^2+a^2}{r_+^2+a^2}\right)^2 .
\]
Observe that
\[
    B(r_+)=0,
    \qquad
    B'(r_+)= -\frac{2r_S}{r_S r_+}
             +\frac{4r_+}{r_S r_+}
           =\frac{2(r_+-r_-)}{r_S r_+}.
\]
Also,
\[
    \Delta=(r-r_+)(r-r_-)
          =(r-r_+)(r_+-r_-)+O(\Delta^2).
\]
Hence
\[
    B(r)=B(r_+)+B'(r_+)(r-r_+)+O((r-r_+)^2)
         =\frac{2\Delta}{r_S r_+}+O(\Delta^2).
\]
Now,
\[
    {\rho^2} \xi^2
    =-\Delta\left(1-\frac{a^2\sin^2\theta}{r_+^2+a^2}\right)^2
      +O(\Delta^2)
    =-\Delta\left(\frac{\rho^2_+}{r_S r_+}\right)^2
      +O(\Delta^2).
\]
Near $H$ we have
\[
    \frac{1}{\rho^2}
    =\frac{1}{\rho^2_+ + O(r-r_+)}
    =\frac{1}{\rho^2_+ + O(\Delta)}
    =\frac{1}{\rho^2_+}+O(\Delta),
\]
thus
\[
    -\xi^2
    =\Delta\frac{{\rho^2}_+}{(r_S r_+)^2}+O(\Delta^2)
\]
and
\[
    \partial_r\left(-\xi^2\right)\big|_{H}
    =(r_+-r_-)\left(\frac{{\rho}_+}{r_S r_+}\right)^2.\]
   Finally, using 
\[
\xi_r=g_{rv}\xi^v+g_{r\psi}\xi^\psi=1-a\sin^2\theta\,\cdot\Omega_H=1-a\sin^2\theta\,\cdot\frac{a}{r_Sr_+}=\frac{\rho^2_+}{r_Sr_+},
\] we get   
   \[\kappa
    =\frac{
-\frac{1}{2}\left(\partial_\mu \xi^2\right)\big|_H}{ \xi_\mu\big|_H}
    =\frac{\frac{1}{2}(r_+-r_-)\left(\frac{{\rho}_+}{r_S r_+}\right)^2}{\frac{\rho^2_+}{r_Sr_+}}=\frac{r_+-r_-}{2r_S r_+}.
\]

The ingoing Kerr coordinates used above are regular on the future event horizon $r=r_+$, and they allow us to verify that $\xi$ is null on $r=r_+$ and to compute the surface gravity. However, these coordinates do not display the bifurcation surface. To see that the horizon generator vanishes on the bifurcation surface, one must pass to Kruskal-type coordinates $(U,V,\theta,\widetilde{\psi})$ adapted to the horizon. In such coordinates the future and past horizons are given by $U=0$ and $V=0$, respectively, while the bifurcation surface is $\Sigma=\{U=V=0\}$. The Killing generator takes the local boost form
\[
\xi = \kappa\left(V\frac{\partial}{\partial V} - U\frac{\partial}{\partial U}\right),
\]
and therefore $\xi|_{\Sigma}=0$. We will not go into the details of Kruskal-type coordinates construction here.

In order to compute the surface charge we return to the Boyer--Lindquist coordinates. From our tetrad
we get
\[
    \iota_\xi e^0
    =\frac{\sqrt{\Delta}}{\rho}
      \left(1-a\Omega_H\sin^2\theta\right),
    \,\,\,
    \iota_\xi e^1=0,
    \,\,\,
    \iota_\xi e^2=0,
    \,\,\,
    \iota_\xi e^3
    =\frac{\sin\theta}{\rho}
      \left[(r^2+a^2)\Omega_H-a\right].
\]
Here $\iota_\xi e^0,\iota_\xi e^3\to 0$ as $r\to r_+$. Observe that
$\omega^{ab}=P^{ab}_{c}e^c,
$
where $P^{ab}_{c}=P^{ab}_{c}(r,\theta)$. Hence, $
    \iota_\xi\omega^{ab}
    =P^{ab}_{0}\,\iota_\xi e^0+P^{ab}_{3}\,\iota_\xi e^3.$
All $P^{ab}_c$ have finite limits as $r\to r_+$ except $P^{01}_0$.
The only nonzero limiting contribution to $\iota_\xi\omega^{ab}$ comes from 
\[
\begin{aligned}
\iota_\xi\omega^{01}
&=
-\frac{a^2(2r\sin^2\theta+r_S\cos^2\theta)-r^2r_S}
       {2\rho^3\sqrt{\Delta}}\,
 \frac{\sqrt{\Delta}}{\rho}
 \left(1-\frac{a^2}{r_S r_+}\sin^2\theta\right) \\
&=
\frac{-2a^2r +(2r-r_S)a^2\cos^2\theta+r^2r_S}
     {2\rho^4}\,
 \frac{\rho^2_+}{r_S r_+} \\
&=
\frac{(2r-r_S)(a^2\cos^2\theta+r^2)}
     {2\rho^{-2}_+\rho^4r_S r_+}
 \longrightarrow\frac{r_+-r_-}{2r_S r_+}
 =\kappa 
\end{aligned}
\] as $r\downarrow r_+.$
Hence, with $\Sigma_r=\{t=0,r={\rm constant}\}$ we have 
\begin{align*}
Q[\xi]_H
&=\lim_{r\downarrow r_+}\int_{\Sigma_r}\frac12\epsilon_{abcd}\,
  \iota_\xi\omega^{ab}\,e^c\wedge e^d \\
&=2\int_{\Sigma_{r_+}}\iota_\xi\omega^{01}\,e^2\wedge e^3 \\
&=2\int_0^\pi\int_0^{2\pi}
  \kappa\,\rho_+\,d\theta\,\,
  \frac{\sin\theta}{\rho_+}(r_+^2+a^2)\,d\phi \\
&=2\kappa\, {\rm Area}(\Sigma_{r_+}).
\end{align*}
The Wald entropy is then
\[
    \cS_{BH}
    =\frac{c^3}{16\pi G}\,\frac{2\pi}{\kappa}\frac{k_B}{\hbar}\,Q[\xi]_H
    =
    \frac{k_B}{4}\frac{{\rm Area}(\Sigma_{r_+})}{\ell_P^2}
\]
and is again in accordance with Bekenstein's area law.

\bibliographystyle{plain}
\bibliography{adGR_bib}

\end{document}